\DeclareMathOperator*{\argmin}{arg\,min}
\newtheorem{Th}{Theorem} 
\newtheorem{prop}{Proposition}
\theoremstyle{definition}
\theoremstyle{remark}
\title{Why do similarity matching objectives lead to Hebbian/anti-Hebbian networks?}
\author[1]{Cengiz Pehlevan} 
\author[1,2]{Anirvan M. Sengupta} 
\author[1,3]{Dmitri B. Chklovskii} 
\affil[1]{Center for Computational Biology, Flatiron Institute, New York, NY}
\affil[2]{Physics and Astronomy Department, Rutgers University, New Brunswick, NJ}
\affil[3]{NYU Langone Medical Center, New York, NY}
 \date{\vspace{-5ex}}
\begin{document}

\maketitle

\begin{abstract}

Modeling self-organization of neural networks for unsupervised learning using Hebbian and anti-Hebbian plasticity has a long history in neuroscience. Yet, derivations of single-layer networks with such local learning rules from principled optimization objectives became possible only recently, with the introduction of similarity matching objectives. What explains the success of similarity matching objectives in deriving neural networks with local learning rules? Here, using dimensionality reduction as an example, we introduce several variable substitutions that illuminate the success of similarity matching. We show that the full network objective may be optimized separately for each synapse using local learning rules both in the offline and online settings. We formalize the long-standing intuition of the rivalry between Hebbian and anti-Hebbian rules by formulating a min-max optimization problem. We introduce a novel dimensionality reduction objective using fractional matrix exponents. To illustrate the generality of our approach, we apply it to a novel formulation of dimensionality reduction combined with whitening. We confirm numerically that the networks with learning rules derived from principled objectives perform better than those with heuristic learning rules. 
\end{abstract}

\section{Introduction}

The human brain generates complex behaviors via the dynamics of electrical activity in a network of $\sim10^{11}$ neurons each making $\sim10^{4}$ synaptic connections. As there is no known centralized authority determining which specific connections a neuron makes or specifying the weights of individual synapses, synaptic connections must be established based on local rules.  Therefore, a major challenge in neuroscience is to determine local synaptic learning rules that would ensure that the network acts coherently, i.e. guarantee robust network self-organization.

Much work has been devoted to the self-organization of neural networks for solving unsupervised computational tasks using Hebbian and anti-Hebbian learning rules \citep{foldiak1990forming,foldiak1989adaptive,rubner1989self,rubner1990development,carlson1990anti,plumbley1993hebbian,leen1991,plumbley1993efficient,linsker1997local}. 
Unsupervised setting is natural in biology because large-scale labeled datasets are typically unavailable. Hebbian and anti-Hebbian learning rules are biologically plausible 
because they are local: The weight of an (anti-)Hebbian synapse is proportional to the (minus) correlation in activity between the two neurons the synapse connects. 

In networks for dimensionality reduction, for example, feedforward connections use Hebbian rules and lateral - anti-Hebbian, Figure 1. Hebbian rules attempt to align each neuronal feature vector, whose components are the weights of synapses impinging onto the neuron, with the input space direction of greatest variance. Anti-Hebbian rules mediate competition among neurons which prevents their feature vectors from aligning in the same direction. A rivalry between the two kinds of rules results in the equilibrium where synaptic weight vectors span the principal subspace of the input covariance matrix, i. e. the subspace spanned by the eigenvectors corresponding to the largest eigenvalues.

However, in most existing single-layer networks, Figure 1, Hebbian and anti-Hebbian learning rules were postulated rather than derived from a principled objective. Having such derivation should yield better performing rules and deeper understanding than has been achived using heuristic rules. But, until recently, all derivations of single-layer networks from principled objectives led to biologically implausible non-local learning rules, where the weight of a synapse depends on the activities of neurons other than the two the synapse connects. 

Recently, single-layer networks with local learning rules have been derived from similarity matching objective functions \citep{pehlevan2015MDS,pehlevan2014NMF,hu2014SMF}. But why do similarity matching objectives lead to neural networks with local, Hebbian and anti-Hebbian learning rules? A clear answer to this question has been lacking.

Here, we answer this question by performing several illuminating variable transformations. Specifically, we reduce the full network optimization problem to a set of trivial optimization problems for each synapse which can be solved locally. Eliminating neural activity variables leads to a min-max objective in terms of feedforward and lateral synaptic weight matrices. This finally formalizes the long-held intuition about the adversarial relationship of Hebbian and anti-Hebbian learning rules.

In this paper, we make the following contributions. In Section \ref{S2}, we present a more transparent derivation of the previously proposed online similarity matching algorithm for Principal Subspace Projection (PSP). In Section \ref{S3}, we propose a novel objective for PSP combined with spherizing, or whitening, the data, which we name Principal Subspace Whitening (PSW), and derive from it a biologically plausible online algorithm. Also, in Sections \ref{S2} and \ref{S3}, we demonstrate that stability in the offline setting guarantees projection onto the principal subspace and give principled learning rate recommendations. In Section \ref{S4}, by eliminating activity variables from the objectives, we derive min-max formulations of PSP and PSW which yield themselves to game-theoretical interpretations. In Section \ref{S5}, by expressing the optimization objectives in terms of feedforward synaptic weights only, we arrive at novel formulations of dimensionality reduction in terms of fractional
powers of matrices. In Section \ref{S6}, we demonstrate numerically that the performance of our online algorithms is superior to the heuristic ones.

\section{From similarity matching to Hebbian/anti-Hebbian networks for PSP}\label{S2}

\subsection{Derivation of a mixed PSP from similarity matching}

The PSP problem is formulated as follows. Given $T$ centered input data samples, ${\bf x}_t \in \mathbb{R}^n$, find $T$ projections, ${\bf y}_t \in \mathbb{R}^k$, onto the principal subspace ($k\le n$), i.e. the subspace spanned by eigenvectors corresponding to the $k$ top eigenvalues of the input covariance matrix:
\begin{align}\label{Cdef}
{\bf C} \equiv  \frac 1 T \sum_{t=1}^{T} {\bf x}_t {\bf x}_t^\top = \frac 1 T {\bf X}{\bf X}^\top,
\end{align}
where we resort to a matrix notation by concatenating input column vectors into ${\bf X}=\left[{\bf x}_1,\ldots,{\bf x}_T\right]$. Similarly, outputs are ${\bf Y}=\left[{\bf y}_1,\ldots,{\bf y}_T\right]$.

Our goal is to derive a biologically plausible single-layer neural network implementing PSP by optimizing a principled objective. Biological plausibility requires that the learning rules are local, i.e. synaptic weight update depends on the activity of only the two neurons the synapse connects. The only PSP objective known to yield a single-layer neural network with local learning rules is based on similarity matching \citep{pehlevan2015MDS}. This objective, borrowed from Multi-Dimensional Scaling (MDS), minimizes the mismatch between the similarity of inputs and outputs \citep{mardia1980multivariate,williams01ona,cox2000multidimensional}:
\begin{align}\label{SM}
{\rm PSP:} \hspace{3cm} \min_{{\bf Y}\in \mathbb{R}^{k\times T}} \frac 1{T^2} \left\Vert {\bf X}^\top{\bf X}-{\bf Y}^\top{\bf Y}\right\Vert_F^2 .
\end{align}
Here, similarity is quantified by the inner products between all pairs of inputs (outputs) comprising the Grammians ${\bf X}^\top{\bf X}$ (${\bf Y}^\top{\bf Y}$). 

One can understand intuitively that the objective \eqref{SM} is optimized by the projection onto the principal subspace by considering the following (for a rigorous proof see \citep{pehlevan2015normative,mardia1980multivariate,cox2000multidimensional}). First, substitute a Singular Value Decomposition (SVD) for matrices ${\bf X}$ and ${\bf Y}$ and note that the mismatch is minimized by matching right singular vectors of ${\bf Y}$ to that of ${\bf X}$. Then, rotating the Grammians to the diagonal basis reduces the minimization problem to minimizing the mismatch between the corresponding singular values squared. Therefore, ${\bf Y}$ is given by the top $k$ right singular vectors of ${\bf X}$ scaled by corresponding singular values. As the objective \eqref{SM} is invariant to the left-multiplication of  ${\bf Y}$ by an orthogonal matrix, it has infinitely many degenerate solutions. One such solution corresponds to the Principal Component Analysis (PCA). 

Unlike non-neural-network formulations of PSP or PCA, similarity matching outputs principal components (scores) rather than principal eigenvectors of the input covariance (loadings). Such difference in formulation is motivated by our interest in PSP or PCA neural networks \citep{diamantaras1996principal} that output principal components, ${\bf y}_t$, rather than principal eigenvectors. Principal eigenvectors are not transmitted downstream of the network but can be recovered computationally from the synaptic weight matrices. Although synaptic weights do not enter the objective \eqref{SM}, in previous work \citep{pehlevan2015MDS}, they arose naturally in the derivation of the online algorithm (see below) and stored correlations between input and output neural activities.

Next, we derive the min-max PSP objective from Eq. \eqref{SM}, starting with expanding the square of the Frobenius norm:
\begin{align}\label{SMfull}
 \argmin_{{\bf Y}\in \mathbb{R}^{k\times T}} \frac 1{T^2} \left\Vert {\bf X}^\top{\bf X}-{\bf Y}^\top{\bf Y}\right\Vert_F^2 =  \argmin_{{\bf Y}\in \mathbb{R}^{k\times T}} \frac{1}{T^2} {\rm Tr}\left(-2{\bf X}^\top{\bf X}{\bf Y}^\top{\bf Y} + {\bf Y}^\top{\bf Y}{\bf Y}^\top{\bf Y} \right).
\end{align}
We can rewrite Eq. \eqref{SMfull} by introducing two new dynamical variable matrices in place of covariance matrices $\frac 1T {\bf X}{\bf Y}^\top$ and $\frac 1T {\bf Y}{\bf Y}^\top$:
\begin{align}\label{S0}
\min_{{\bf Y}\in \mathbb{R}^{k\times T}} \min_{{\bf W}\in \mathbb{R}^{k\times n}}\max_{{\bf M}\in \mathbb{R}^{k\times k}} \, &L_{PSP}({\bf W},{\bf M},{\bf Y}),  {\rm \; \; where} 
\end{align}
\begin{align}\label{SMMW}
\qquad L_{PSP}({\bf W},{\bf M},{\bf Y}) &\equiv {\rm Tr}\left( -\frac{4}{T}{\bf X}^\top{\bf W}^\top{\bf Y} + \frac{2}{T} {\bf Y}^\top{\bf M}{\bf Y} \right) + 2 {\rm Tr}\left({\bf W}^\top{\bf W}\right) -  {\rm Tr}\left({\bf M}^\top{\bf M}\right).
\end{align}  
To see that Eq. \eqref{SMMW} is equivalent to Eq. \eqref{SMfull} find optimal ${\bf W}^*=\frac 1T {\bf Y}{\bf X}^\top$ and ${\bf M}^*=\frac 1T {\bf Y}{\bf Y}^\top$ by setting the corresponding derivatives of objective \eqref{SMMW} to zero. Then, substitute ${\bf W}^*$ and ${\bf M}^*$ into Eq. \eqref{SMMW} to obtain \eqref{SMfull}. 

Finally, we exchange the order of minimization with respect to ${\bf Y}$ and ${\bf W}$ as well as the order of minimization with respect to ${\bf Y}$ and maximization with respect to ${\bf M}$ in Eq. \eqref{SMMW}. The last exchange is justified by the saddle point property (see Proposition \ref{minmaxPSP} in Appendix \ref{A1}). Then, we arrive at the following min-max optimization problem:
\begin{align}\label{SMMW2}
\min_{{\bf W}\in \mathbb{R}^{k\times n}}\max_{{\bf M}\in \mathbb{R}^{k\times k}} \min_{{\bf Y}\in \mathbb{R}^{k\times T}}L_{PSP}({\bf W},{\bf M},{\bf Y}) ,
\end{align}  
where $L_{PSP}({\bf W},{\bf M},{\bf Y})$ is defined in Eq. \eqref{SMMW}. We call this a mixed objective because it includes both output variables, ${\bf Y}$, and covariances, ${\bf W}$ and ${\bf M}$.

\subsection{Offline PSP algorithm}

In this section, we present an offline optimization algorithm to solve the PSP problem and analyze fixed points of the corresponding dynamics. These results will be used in the next Section for the biologically plausible online algorithm implemented by neural networks.

In the offline setting, we can solve Eq. \eqref{SMMW2} by the alternating optimization approach used commonly in neural networks literature \citep{olshausen1996emergence,olshausen1997sparse,arora2015simple}. We, first, minimize with respect to ${\bf Y}$ while keeping ${\bf W}$ and ${\bf M}$ fixed,
\begin{align}
{\bf Y}^* = \argmin_{{\bf Y}\in \mathbb{R}^{k\times T}} L_{PSP}({\bf W},{\bf M},{\bf Y}),
\end{align}
and, second, make a gradient descent-ascent step with respect to ${\bf W}$ and ${\bf M}$ while keeping ${\bf Y}$ fixed:
%
\begin{align}\label{gda}
\left[\begin{array}{c} {\bf W} \hspace{1cm} {\bf M} \end{array}\right] &\longleftarrow \left[\begin{array}{c} {\bf W} \hspace{1cm} {\bf M} \end{array}\right] +  \left[\begin{array}{c} - \eta \frac {\partial L_{PSP}({\bf W},{\bf M},{\bf Y}^*)}{\partial {\bf W}} \hspace{1cm} \frac{\eta}{\tau}\frac {\partial L_{PSP}({\bf W},{\bf M},{\bf Y}^*)}{\partial {\bf M}} \end{array}\right],
\end{align}
where $\eta$ is the ${\bf W}$ learning rate and $\tau > 0$ is a ratio of learning rates for ${\bf W}$ and ${\bf M}$. In Appendix \ref{LSPSP}, we analyze how $\tau$ affects linear stability of the fixed point dynamics. These two phases are iterated until convergence (Algorithm 1)\footnote{This alternating optimization is identical to a gradient descent-ascent (see Proposition \ref{lem1} in Appendix \ref{chainrule}) in ${\bf W}$ and ${\bf M}$ on the objective:  
	\begin{align*}
	l_{PSP}({\bf W},{\bf M}) \equiv  \min_{{\bf Y}\in \mathbb{R}^{k\times T}} L_{PSP}({\bf W},{\bf M},{\bf Y}).
	\end{align*}
}.

\begin{algorithm}[H]
	\caption{Offline min-max PSP\label{offlinePSPalg}}
	\begin{algorithmic}[1]
		\STATE Initialize ${\bf W}$. Initalize ${\bf M}$ as a positive definite matrix.
		\STATE Iterate until convergence:
		\begin{ALC@g}
			\STATE
			Minimize Eq. \eqref{SMMW} with respect to ${\bf Y} $, keeping ${\bf W}$ and ${\bf M}$ fixed: 
			\begin{align}\label{Y}
			{\bf Y} = {\bf M}^{-1}{\bf W}{\bf X}.
			\end{align}

			\STATE  Perform a gradient descent-ascent step with respect to ${\bf W}$ and ${\bf M}$ for a fixed ${\bf Y}$:
			\begin{align}
			{\bf W}&\longleftarrow {\bf W} + 2\eta \left(\frac 1T {\bf Y}{\bf X}^\top- {\bf W}\right), \nonumber \\
			{\bf M} & \longleftarrow {\bf M}  +  \frac \eta{\tau} \left(\frac 1T {\bf Y}{\bf Y}^\top-{\bf M}\right).
			\end{align}
			where the step size, $0<\eta<1$, may depend on the iteration.
		\end{ALC@g}
		
	\end{algorithmic}
\end{algorithm}
Optimal ${\bf Y}$ in Eq. \eqref{Y} exists because ${\bf M}$ stays positive definite if initialized as such. 

\subsection{Linearly stable fixed points of Algorithm 1 correspond to the PSP}

Here we demonstrate that convergence of Algorithm \ref{offlinePSPalg} to fixed ${\bf W}$ and ${\bf M}$ implies that ${\bf Y}$ is a PSP of ${\bf X}$. To this end, we approximate the gradient descent-ascent dynamics in the limit of small learning rate with the system of  differential equations:
\begin{align}\label{gdPSP}
{\bf Y}(t) &= {\bf M}^{-1}(t){\bf W}(t){\bf X}, \nonumber \\
\frac{d {\bf W}(t)}{dt} &= \frac 2{T}{\bf Y}(t){\bf X}^\top -2{\bf W}(t), \nonumber \\
{\tau}\frac{d{\bf M}(t)}{dt} &= \frac{1}{T}{\bf Y}(t){\bf Y}(t)^\top - {\bf M}(t),
\end{align}
where $t$ is now the time index for gradient descent-ascent dynamics. 

To state our main result in Theorem \ref{mainLSPSP}, we define the ``filter matrix" ${\bf F}(t)$ whose rows are ``neural filters"
\begin{align}\label{fdef}
{\bf F}(t) := {\bf M}^{-1}(t){\bf W}(t),
\end{align}
so that, according to Eq. \eqref{Y}, 
\begin{align}
{\bf Y}(t) = {\bf F}(t){\bf X}.
\end{align}

\begin{Th}\label{mainLSPSP} Fixed points of the dynamical system \eqref{gdPSP} have the following properties: 
	\begin{enumerate}
		\item The neural filters, ${\bf F}$, are orthonormal, i.e. ${\bf F F^\top}={\bf I}$.
		\item The neural filters span a $k$-dimensional subspace in $\mathbb{R}^n$ spanned by some $k$ eigenvectors of the input covariance matrix.
		\item Stability of a fixed point requires that the neural filters span the {\bf principal} subspace of ${\bf X}$.
		\item Suppose the neural filters span the principal subspace. Define
		\begin{align}
		\gamma_{ij} := 2+\frac{\left(\sigma_i-\sigma_j\right)^2}{\sigma_i\sigma_j},
		\end{align}
		where $i = 1,\ldots,k$, $j=1,\ldots,k$ and $\lbrace \sigma_1,\ldots,\sigma_k\rbrace$ are the top $k$ principal eigenvalues of ${\bf C}$. We assume ${\sigma}_k \neq \sigma_{k+1}$. This fixed point is linearly stable if and only if:
		\begin{align}\label{tauPSP}
		\tau < \frac{1}{2-4/\gamma_{ij}}
		\end{align}
		for all $(i,j)$ pairs. By linearly stable we mean that linear perturbations of ${\bf W}$ and ${\bf M}$ converge to a configuration in which the new neural filters are merely rotations within the principal subspace of the original neural filters. 
	\end{enumerate}
\end{Th}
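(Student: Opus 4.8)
The plan is to treat Parts 1--2 as algebra on the fixed-point equations, and Parts 3--4 as a linearization whose Jacobian block-diagonalizes in the eigenbasis of ${\bf C}$. First I would read off the fixed-point conditions: setting $d{\bf W}/dt=0$ and $d{\bf M}/dt=0$ in \eqref{gdPSP} and using ${\bf Y}={\bf F}{\bf X}$ together with ${\bf C}=\frac1T{\bf X}{\bf X}^\top$ gives ${\bf W}={\bf F}{\bf C}$ and ${\bf M}={\bf F}{\bf C}{\bf F}^\top$. The defining relation ${\bf M}{\bf F}={\bf W}$ (from \eqref{fdef}) then forces ${\bf F}{\bf C}{\bf F}^\top{\bf F}={\bf F}{\bf C}$, i.e. ${\bf F}{\bf C}({\bf F}^\top{\bf F}-{\bf I})={\bf 0}$. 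Right-multiplying by ${\bf F}^\top$ and using ${\bf M}={\bf F}{\bf C}{\bf F}^\top$ yields ${\bf M}({\bf F}{\bf F}^\top-{\bf I})={\bf 0}$; since ${\bf M}$ is kept positive definite (hence invertible), this gives ${\bf F}{\bf F}^\top={\bf I}$, establishing Part 1.

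For Part 2 I would set $P:={\bf F}^\top{\bf F}$, which is now an orthogonal projector onto the row space of ${\bf F}$. The fixed-point identity reads ${\bf F}{\bf C}({\bf I}-P)={\bf 0}$; because the columns of ${\bf F}^\top$ span $\mathrm{range}(P)$, this strengthens to $({\bf I}-P){\bf C}P={\bf 0}$, whose transpose gives $P{\bf C}=P{\bf C}P={\bf C}P$. Thus $P$ commutes with the symmetric matrix ${\bf C}$, so $\mathrm{range}(P)$ is a ${\bf C}$-invariant subspace and is therefore spanned by some $k$ eigenvectors of ${\bf C}$.

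For Parts 3--4 I would linearize \eqref{gdPSP}, writing $\delta{\bf F}={\bf M}^{-1}(\delta{\bf W}-\delta{\bf M}\,{\bf F})$ and, exploiting the rotational degeneracy, taking the representative fixed point whose rows are $k$ selected unit eigenvectors, so ${\bf M}^*=\mathrm{diag}(\sigma_1,\dots,\sigma_k)$. Expanding each row of $\delta{\bf W}$ in the eigenbasis of ${\bf C}$ decouples the Jacobian. For a component $w_{i\beta}$ of filter $i$ along a non-selected eigenvector $v_\beta$ (eigenvalue $\lambda_\beta$), the term $\delta{\bf M}\,{\bf F}^*$ has no component along $v_\beta$ (its rows lie in the selected span), so ${\bf M}$ drops out and one finds $\dot w_{i\beta}=2(\lambda_\beta/\sigma_i-1)\,w_{i\beta}$. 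These escape modes decay iff every selected $\sigma_i$ exceeds every non-selected $\lambda_\beta$, i.e. iff the filters span the principal subspace; this is Part 3, and the mode's $\tau$-independence explains why Part 3 imposes no condition on $\tau$.

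Restricting to the principal subspace for Part 4, the in-subspace modes split (taking $\delta{\bf M}$ symmetric, consistent with the symmetry-preserving ${\bf M}$-dynamics; the antisymmetric part of $\delta{\bf M}$ trivially relaxes as $\tau\,\dot{}\,=-\,\cdot$). The diagonal blocks $(w_{ii},m_{ii})$ give a $2\times2$ system with negative trace and positive determinant, hence unconditionally stable; this matches $\gamma_{ii}=2$ rendering \eqref{tauPSP} vacuous. Each off-diagonal pair gives a $3\times3$ block in $(w_{ij},w_{ji},m_{ij})$; I would first check that $(\sigma_j,-\sigma_i,\sigma_j-\sigma_i)$ is an exact zero eigenvector, identifiable as the infinitesimal rotation mixing filters $i$ and $j$ --- exactly the rotations within the principal subspace that the theorem's stability notion permits. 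The remaining two eigenvalues require Routh--Hurwitz: the product of the nonzero eigenvalues equals the sum of principal $2\times2$ minors, which simplifies (using $\gamma_{ij}=\sigma_i/\sigma_j+\sigma_j/\sigma_i$) to $4(2-\gamma_{ij})+2\gamma_{ij}/\tau$, so positivity gives $\tau<\gamma_{ij}/[2(\gamma_{ij}-2)]=1/(2-4/\gamma_{ij})$. The hard part will be this $3\times3$ analysis: correctly enforcing the symmetry of ${\bf M}$, cleanly peeling off the rotational zero mode so the cubic reduces to a quadratic, and verifying that the trace condition $\tau<(\gamma_{ij}+1)/[2(\gamma_{ij}-2)]$ is slack relative to the determinant-type condition, so that $a_1>0$ alone yields the stated bound \eqref{tauPSP} for all pairs $(i,j)$.
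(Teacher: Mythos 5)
Your proposal is correct, and the end results match the paper exactly (I verified your claimed zero eigenvector $(\sigma_j,-\sigma_i,\sigma_j-\sigma_i)$ and your minor sum $4(2-\gamma_{ij})+2\gamma_{ij}/\tau$; the paper's determinant and trace conditions, Eqs. \eqref{scond2} and \eqref{scond1}, are precisely your product and trace conditions, with the same slackness relation between them). The interesting differences are in the linearization bookkeeping. For Parts 1--2 you and the paper do the same algebra (your projector-commutation phrasing is the paper's $\bar{\bf F}^\top\bar{\bf F}{\bf C}={\bf C}\bar{\bf F}^\top\bar{\bf F}$ argument). For Parts 3--4 the paper linearizes at a \emph{general} fixed point in the variables $({\bf \delta F},{\bf \delta M})$, decomposes ${\bf \delta F}={\bf \delta A}\bar{\bf F}+{\bf \delta S}\bar{\bf F}+{\bf \delta B}\bar{\bf G}$, proves $\text{eig}(\bar{\bf M})=\lbrace\sigma_1,\ldots,\sigma_k\rbrace$ as a lemma, and handles the rotational degeneracy by exhibiting an \emph{integral of motion} ${\bf \Omega}$, which reduces each pair to a forced $2\times 2$ system in $(\delta M^U_{ij},\delta S^U_{ij})$; you instead quotient out the degeneracy at the start by passing to a representative fixed point with $\bar{\bf M}$ diagonal, work in raw coordinates $(w_{ij},w_{ji},m_{ij})$, identify the rotation mode as an explicit null vector of each $3\times3$ block, and apply Routh--Hurwitz to the remaining quadratic factor via trace and principal-minor sums. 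Your route is more elementary and self-contained (no $\delta{\bf A}/\delta{\bf S}$ change of variables, no conserved quantity), and the null-vector identification directly certifies the theorem's ``stable up to rotations'' notion; what the paper's route buys is the explicit limiting configuration (the residual ${\bf \delta M}$ offset in terms of ${\bf \Omega}$ and the decay ${\bf \delta S}\to 0$ showing orthonormality is restored), and it never needs an equivariance argument. That argument is the one small gap you should fill: reducing to the representative fixed point requires noting that the flow \eqref{gdPSP} commutes with the group action $({\bf W},{\bf M})\mapsto({\bf Q}{\bf W},{\bf Q}{\bf M}{\bf Q}^\top)$, ${\bf Q}$ orthogonal, so Jacobians along the fixed-point orbit are similar and have identical spectra; this is one line, but without it ``exploiting the rotational degeneracy'' is an assertion rather than a proof.
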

\begin{proof} See Appendix \ref{LSPSP}. \end{proof}

Based on Theorem \ref{mainLSPSP} we claim that, provided the dynamics converges to a fixed point, Algorithm \ref{offlinePSPalg} has found a PSP of input data. Note that the orthonormality of the neural filters is desired and consistent with PSP since, in this approach, outputs, ${\bf Y}$, are interpreted as coordinates with respect to a basis spanning the principal subspace. 

Theorem \ref{mainLSPSP} yields a practical recommendation for choosing learning rate parameters in simulations. In a typical situation, one will not know the eigenvalues of the covariance matrix a priori but can rely on the fact, $\gamma_{ij}\geq 2$. Then, Eq. \eqref{tauPSP} implies that for $\tau \leq 1/2$ the principal subspace is linearly stable leading to numerical convergence and stability.

\subsection{Online neural min-max optimization algorithms}

Unlike the offline setting considered so far, where all the input data are available from the outset, in the online setting, input data are streamed to the algorithm sequentially, one at a time. The algorithm must compute the corresponding output before the next input arrives and transmit it downstream. Once transmitted, the output cannot be altered. Moreover, the algorithm cannot store in memory any sizable fraction of past inputs or outputs but only a few, $\mathcal{O}(n k)$, state variables.  

Whereas developing algorithms for the online setting is more challenging than that for the offline, it is necessary both for data analysis and for modeling biological neural networks. The size of modern datasets may exceed that of available RAM and/or the output must be computed before the dataset is fully streamed. Biological neural networks operating on the data streamed by the sensory organs are incapable of storing any significant fraction of it and compute the output on the fly.

\begin{figure}
	\centering
	\includegraphics{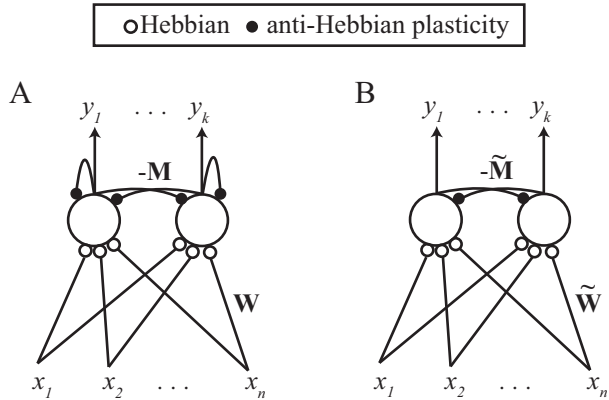}
	\caption{Dimensionality reduction neural networks derived by min-max optimization in the online setting. A. Network with autapses. B. Network without autapses.}
	\label{Fig1}
\end{figure}

\cite{pehlevan2015MDS} gave a derivation of a neural online algorithm for PSP, starting from the original similarity matching cost function \eqref{SM}. Here, instead, we start from the min-max form of similarity matching  \eqref{SMMW2} and end up with a class of algorithms that reduce to the algorithm of \cite{pehlevan2015MDS} for special choices of learning rates. Our main contribution, however, is that the current derivation is much more intuitive and simpler, with insights to why similarity matching leads to local learning rules.

We start by rewriting the min-max PSP objective \eqref{SMMW2} as a sum of time-separable terms that can be optimized independently:
\begin{align}\label{SMMW2t}
\min_{{\bf W}\in \mathbb{R}^{k\times n}}\max_{{\bf M}\in \mathbb{R}^{k\times k}}  \, \frac{1}{T} \sum_{t=1}^T l_{PSP,t} ({\bf W},{\bf M}),
\end{align}  
where
\begin{align}
 l_{PSP,t} ({\bf W},{\bf M}) \equiv 2 {\rm Tr}\left({\bf W}^\top{\bf W}\right) - {\rm Tr}\left({\bf M}^\top{\bf M}\right) +  \min_{{\bf y}_t\in \mathbb{R}^{k\times 1}}  l_t({\bf W},{\bf M},{\bf y}_t),
\end{align}
and
\begin{align}\label{lyapunov}
l_t({\bf W},{\bf M},{\bf y}_t)=-4{\bf x}_t^\top{\bf W}^\top{\bf y}_t + 2{\bf y}_t^\top{\bf M}{\bf y}_t.
\end{align} 
This separation in time is a benefit of the min-max PSP objective \eqref{SMMW2}, and leads to a natural way to derive an online algorithm that was not available for the original similarity matching cost function \eqref{SM}.

To solve the optimization problem, Eq. \eqref{SMMW2t}, in the online setting, we optimize sequentially each ${l}_{PSP,t}$. For each $t$, first, minimize Eq.\eqref{lyapunov} with respect to ${\bf y}_t$ while keeping ${\bf W}_t$ and ${\bf M}_t$ fixed. Second, 
make a gradient descent-ascent step with respect to ${\bf W}_t$ and ${\bf M}_t$ for fixed ${\bf Y}$:
\begin{align}\label{ogda}
{\bf W}_{t+1}&= {\bf W}_t - \eta_t \frac {\partial l_{PSP,t}({\bf W}_t,{\bf M}_t)}{\partial {\bf W}_t}, \nonumber \\
 {\bf M}_{t+1} & = {\bf M}_t  +  \frac{\eta_{t}}{\tau} \frac {\partial l_{PSP,t}({\bf W}_t,{\bf M}_t)}{\partial {\bf M}_t},
\end{align}
where $0<\eta_{t}<1$ is the ${\bf W}$ learning rate and $\tau>0$ is the ratio of ${\bf W}$ and ${\bf M}$ learning rates. As before, Proposition \ref{lem1} (Appendix \ref{chainrule}) ensures that the online gradient descent-ascent updates, Eq. \eqref{ogda}, follow from alternating optimization \citep{olshausen1996emergence,olshausen1997sparse,arora2015simple} of $l_{PSP,t}$.

\begin{algorithm}[H]
\caption{Online min-max PSP\label{onlinePSPalg}}
\begin{algorithmic}[1]
\STATE At $t=0$, initialize the synaptic weight matrices, ${\bf W}_1$ and ${\bf M}_1$. ${\bf M}_1$ must be symmetric and positive definite.
\STATE Repeat for each $t = 1,\ldots T$
\begin{ALC@g}
\STATE Receive input ${\bf x}_t$
\STATE Neural activity:  Run until convergence
\begin{align}\label{gdMDSon}
\frac{d{\bf y}_t(\gamma)}{d\gamma} = {\bf W}_t{\bf x}_t -{\bf M}_t{\bf y}_t.
\end{align}
\STATE Plasticity:
Update synaptic weight matrices,
\begin{align}\label{wmSM}
{\bf W}_{t+1} &= {\bf W}_{t} + 2\eta_{t} \left({\bf y}_t{\bf x}_t^\top-{\bf W}_t\right), \nonumber \\
{\bf M}_{t+1} &= {\bf M}_{t}+\frac{\eta_{t}}{\tau} \left({\bf y}_t{\bf y}_t^\top-{\bf M}_t\right).
\end{align}
\end{ALC@g}
\end{algorithmic}
\end{algorithm}
Algorithm \ref{onlinePSPalg} can be implemented by a biologically plausible neural network. The dynamics \eqref{gdMDSon} corresponds to neural activity in a recurrent circuit, where ${\bf W}_t$ is the feedforward synaptic weight matrix and $-{\bf M}_t$ is the lateral synaptic weight matrix, Fig. \ref{Fig1}A. Since ${\bf M}_t$ is always positive definite, Eq. \eqref{lyapunov} is a Lyapunov function for neural activity. Hence the dynamics is guaranteed to converge to a unique fixed point, ${\bf y}_t = {\bf M}^{-1}_t{\bf W}_t{\bf x}_t$, where matrix inversion is computed iteratively in a distributed manner.

Updates of covariance matrices, Eq. \eqref{wmSM}, can be interpreted as synaptic learning rules: Hebbian for feedforward and anti-Hebbian (due to the $``-"$ sign in \eqref{gdMDSon}) for lateral synaptic weights. Importantly, these rules are local - the weight of each synapse depends only on the activity of the pair of neurons that synapse connects - and therefore biologically plausible. 

Even requiring full optimization with respect to ${\bf y}_t$ vs. a gradient step with respect to ${\bf W}_t$ and ${\bf M}_t$ may have a biological justification. As neural activity dynamics is typically faster than synaptic plasticity, it may settle before the arrival of the next input. 

To see why similarity matching leads to local learning rules let us consider Eqs. \eqref{SMMW2} and \eqref{SMMW2t}. Aside from separating in time, useful for derivation of online learning rules, $L_{PSP}({\bf W},{\bf M},{\bf Y})$ also separates in synaptic weights and their pre- and postsynaptic neural activities,
\begin{align}
L_{PSP}({\bf W},{\bf M},{\bf Y})=\sum_t  \left[\sum_{ij}\left( 2W_{ij}^2-4W_{ij}x_{t,j}y_{t,i}\right) - \sum_{ij}\left( M_{ij}^2+2M_{ij}y_{t,j}y_{t,i}\right)\right].
\end{align}
Therefore, a derivative with respect to a synaptic weight depends only on the quantities accessible to the synapse.

Finally, we address two potential criticisms of the neural PSP algorithm. First is the existence of autapses, i.e. self-coupling of neurons, in our network manifested in nonzero diagonals of the lateral connectivity matrix, ${\bf M}$, Fig \ref{Fig1}A. Whereas autapses are encountered in the brain, they are rarely seen in principal neurons \citep{ikeda2006autapses}. Second is the symmetry of lateral synaptic weights in our network which is not observed experimentally. We derive an autapse-free network architecture (zeros on the diagonal of the lateral synaptic weight matrix ${\bf M}_t$) with asymmetric lateral connectivity, Fig \ref{Fig1}B, by using coordinate descent \citep{pehlevan2015MDS} in place of gradient descent in the neural dynamics stage \eqref{gdMDSon} (see Appendix \ref{coord}). The resulting algorithm produces the same outputs as the current algorithm and for the special case $\tau=1/2$ and $\eta_t = \eta/2$, reduces to the  algorithm with ``forgetting'' of \cite{pehlevan2015MDS}.

\section{From constrained similarity matching to Hebbian/anti-Hebbian networks for PSW}\label{S3}

The variable substitution method we introduced in the previous section can be applied to other computational objectives in order to derive neural networks with local learning rules. To give an example, we derive a neural network for PSW, which can be formulated as a constrained similarity matching problem. This example also illustrates how an optimization constraint can be implemented by biological mechanisms.

 \subsection{Derivation of PSW from constrained similarity matching}
 
 The PSW problem is closely related to PSP: project centered input data samples onto the principal subspace ($k\le n$), and ``spherize" the data in the subspace so that the variances in all directions are 1. To derive a neural PSW algorithm, we use the similarity matching objective with an additional constraint:
 \begin{align}\label{CSM}
 {\rm PSW:} \hspace{3cm} \min_{{\bf Y}\in \mathbb{R}^{k\times T}} \frac 1{T^2} \left\Vert {\bf X}^\top{\bf X}-{\bf Y}^\top{\bf Y}\right\Vert_F^2, \qquad {\rm s.t.}\quad \frac 1T{\bf Y}{\bf Y}^\top = {\bf I}
 \end{align}
 We rewrite Eq. \eqref{CSM} by expanding the Frobenius norm squared and dropping the ${\rm Tr} \left({\bf Y}^\top{\bf Y}{\bf Y}^\top{\bf Y}\right)$ term, which is constant under the constraint, thus reducing \eqref{CSM} to a constrained similarity alignment problem:
 \begin{align}\label{CSA}
 \min_{{\bf Y}\in \mathbb{R}^{k\times T}} \left ( -\frac 1{T^2} {\bf X}^\top{\bf X}{\bf Y}^\top{\bf Y} \right ), \qquad {\rm s.t.}\quad \frac 1T{\bf Y}{\bf Y}^\top = {\bf I}.
 \end{align}
 To see that objective \eqref{CSA} is optimized by the PSW, first, substitute a Singular Value Decomposition (SVD) for matrices ${\bf X}$ and ${\bf Y}$ and note that the alignment is maximized by matching right singular vectors of  ${\bf Y}$ to ${\bf X}$ and rotating to the diagonal basis (for a rigorous proof see \citep{pehlevan2015normative}).  Since the squared singular values of ${\bf Y}$ equal unity, the objective \eqref{CSA} is reduced to a summation of $k$ squared singular values of ${\bf X}$ and is optimized by choosing the top $k$. Then, ${\bf Y}$ is given by the top $k$ right singular vectors of ${\bf X}$ scaled by $\sqrt{T}$. As before, objective \eqref{CSA} is invariant to the left-multiplication of  ${\bf Y}$ by an orthogonal matrix and, therefore, has infinitely many degenerate solutions. 
 
 Next, we derive a mixed PSW objective from Eq. \eqref{CSA} by introducing two new dynamical variable matrices: the input-output correlation matrix, ${\bf W}=\frac 1T {\bf X}{\bf Y}^\top$, and the Lagrange multiplier matrix, ${\bf M}$, for the whitening constraint: 
 \begin{align}
 \min_{{\bf Y}\in \mathbb{R}^{k\times T}} \min_{{\bf W}\in \mathbb{R}^{k\times n}}\max_{{\bf M}\in \mathbb{R}^{k\times k}} \, &L_{PSW}({\bf W},{\bf M},{\bf Y}), 
 \end{align}
 where
 \begin{align}\label{CSMMW}
 \qquad L_{PSW}({\bf W},{\bf M},{\bf Y}) &\equiv  -\frac{2}{T}{\rm Tr}\left({\bf X}^\top{\bf W}^\top{\bf Y}\right) + {\rm Tr}\left({\bf W}^\top{\bf W}\right) +  {\rm Tr}\left({\bf M}  \left(\frac{1}{T}{\bf Y}{\bf Y}^\top -{\bf I}\right)\right).
 \end{align}  
 To see that Eq. \eqref{CSMMW} is equivalent to Eq. \eqref{CSA}, find optimal ${\bf W}^*=\frac 1T {\bf Y}{\bf X}^\top$ by setting the corresponding derivatives of the objective \eqref{CSMMW} to zero. Then, substitute ${\bf W}^*$  into Eq. \eqref{CSMMW} to obtain the Lagrangian of Eq. \eqref{CSA}. 
 
 Finally, we exchange the order of minimization with respect to ${\bf Y}$ and ${\bf W}$ as well as the order of minimization with respect to ${\bf Y}$ and maximization with respect to ${\bf M}$ in Eq. \eqref{CSMMW} (see Proposition \ref{minmaxPSW} in Appendix \ref{A2} for a proof). Then, we arrive at the following min-max optimization problem with a mixed objective:
 \begin{align}\label{CSMMW2}
 \min_{{\bf W}\in \mathbb{R}^{k\times n}}\max_{{\bf M}\in \mathbb{R}^{k\times k}} \min_{{\bf Y}\in \mathbb{R}^{k\times T}}L_{PSW}({\bf W},{\bf M},{\bf Y}) ,
 \end{align}  
 where $L_{PSW}({\bf W},{\bf M},{\bf Y})$ is defined in Eq. \eqref{CSMMW}.

 \subsection{Offline PSW algorithm}
 
 Next, we give an offline algorithm for the PSW problem, using the alternating optimization procedure as before. We solve Eq. \eqref{CSMMW2} by, first, optimizing with respect to ${\bf Y}$ for fixed ${\bf W}$ and ${\bf M}$ and, second, making a gradient descent-ascent step with respect to ${\bf W}$ and ${\bf M}$ while keeping ${\bf Y}$ fixed\footnote{This alternating optimization is identical to a gradient descent-ascent (see Proposition \ref{lem1} in Appendix \ref{chainrule}) in ${\bf W}$ and ${\bf M}$ on the objective:  
 	\begin{align*}
 	l_{PSW}({\bf W},{\bf M}) \equiv   \min_{{\bf Y}\in \mathbb{R}^{k\times T}} L_{PSW}({\bf W},{\bf M},{\bf Y}).
 	\end{align*}
 }.
 We arrive at the following algorithm:  
 \begin{algorithm}[H]
 	\caption{Offline min-max PSW\label{offlinePSWalg}}
 	\begin{algorithmic}[1]
 		\STATE Initialize ${\bf W}$. Initialize ${\bf M}$ as a positive definite matrix.
 		\STATE Iterate until convergence:
 		\begin{ALC@g}
 			\STATE
 			Minimize Eq. \eqref{CSMMW} with respect to ${\bf Y} $, keeping ${\bf W}$ and ${\bf M}$ fixed: 
 			\begin{align}\label{YY}
 			{\bf Y} = {\bf M}^{-1}{\bf W}{\bf X}.
 			\end{align}

 			\STATE  Perform a gradient descent-ascent step with respect to ${\bf W}$ and ${\bf M}$ for a fixed ${\bf Y}$:
 			\begin{align}\label{PSWWMupdate}
 			{\bf W}&\longleftarrow {\bf W} + 2\eta \left(\frac 1T {\bf Y}{\bf X}^\top- {\bf W}\right), \nonumber \\
 			{\bf M} & \longleftarrow {\bf M}  +  \frac{\eta}{\tau} \left(\frac 1T {\bf Y}{\bf Y}^\top-{\bf I}\right).
 			\end{align}
 			where the step size, $0<\eta<1$, may depend on the iteration.
 		\end{ALC@g}
 		
 	\end{algorithmic}
 \end{algorithm}
 Convergence of Algorithm \ref{offlinePSWalg} requires the input covariance matrix, ${\bf C}$, to have at least $k$ non-zero eigenvalues. Otherwise, a consistent solution cannot be found because update \eqref{PSWWMupdate} forces ${\bf Y}$ to be full-rank while Eq. \eqref{YY} lowers its rank.

 \subsection{Linearly stable fixed points of Algorithm \ref{offlinePSWalg} correspond to PSW}
 
 Here we claim that convergence of Algorithm \ref{offlinePSWalg} to fixed ${\bf W}$ and ${\bf M}$ implies PSW of ${\bf X}$. In the limit of small learning rate, the gradient descent-ascent dynamics can be approximated with the system of  differential equations:
 \begin{align}\label{gdPSW}
 {\bf Y}(t) &= {\bf M}^{-1}(t){\bf W}(t){\bf X}, \nonumber \\
 \frac{d {\bf W}(t)}{dt} &= \frac 2{T}{\bf Y}(t){\bf X}^\top -2{\bf W}(t), \nonumber \\
 {\tau}\frac{d{\bf M}(t)}{dt} &= \frac{1}{T}{\bf Y}(t){\bf Y}(t)^\top - {\bf I}(t),
 \end{align}
 where $t$ is now the time index for gradient descent-ascent dynamics. We again define the neural filter matrix ${\bf F} = {\bf M}^{-1}{\bf W}$. 
 
 \begin{Th}\label{mainLSPSW} Fixed points of the dynamical system \eqref{gdPSW} have the following properties: 
 	\begin{enumerate}
 		\item The outputs are whitened, i.e. $\frac 1T {\bf Y}{\bf Y}^\top = {\bf I}$.
 		\item The neural filters span a $k$-dimensional subspace in $\mathbb{R}^n$ which is spanned by some $k$ eigenvectors of the input covariance matrix.
 		\item Stability of the fixed point requires that the neural filters span the {\bf principal} subspace of ${\bf X}$.
 		\item Suppose the neural filters span the principal subspace.  This fixed point is linearly stable if and only if 
 		\begin{align}\label{tauPSW}
 		\tau < \frac{\sigma_i+\sigma_j}{2\left(\sigma_i-\sigma_j\right)^2}
 		\end{align}
 		for all $(i,j)$ pairs, $i\neq j$. By linear stability we mean that linear perturbations of ${\bf W}$ and ${\bf M}$ converge to a rotation of the original neural filters within the principal subspace. 
 	\end{enumerate}
 \end{Th}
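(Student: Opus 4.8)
The plan is to follow the same route as the proof of Theorem \ref{mainLSPSP} (Appendix \ref{LSPSP}), adapting each step to the whitening constraint. First I would locate the fixed points by setting $\dot{\bf W}=\dot{\bf M}=0$ in \eqref{gdPSW}. The ${\bf M}$-equation gives directly $\frac 1T {\bf Y}{\bf Y}^\top={\bf I}$, which is Part 1. The ${\bf W}$-equation gives ${\bf W}=\frac 1T{\bf Y}{\bf X}^\top={\bf F}{\bf C}$, and combined with ${\bf F}={\bf M}^{-1}{\bf W}$ it yields the intertwining relation ${\bf M}{\bf F}={\bf F}{\bf C}$, equivalently ${\bf C}{\bf F}^\top={\bf F}^\top{\bf M}$ (using symmetry of ${\bf C}$ and ${\bf M}$). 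The latter shows the row space of ${\bf F}$ (the filter subspace) is ${\bf C}$-invariant; since ${\bf C}$ is symmetric, that subspace is spanned by $k$ of its eigenvectors, giving Part 2. The whitening relation, rewritten as ${\bf F}{\bf C}{\bf F}^\top={\bf I}$, forces ${\bf F}$ to have full row rank so the $k$ eigenvectors are well defined. Writing everything in the eigenbasis of ${\bf C}$, with $S$ the index set of the chosen eigenvectors and ${\bf V}_S$, ${\bf \Sigma}_S$ the corresponding eigenvectors and eigenvalues, I would record the canonical representative ${\bf M}={\bf \Sigma}_S$, ${\bf W}={\bf \Sigma}_S^{1/2}{\bf V}_S^\top$ (all other fixed points differ by an orthogonal rotation, reflecting the degeneracy noted after the theorem).

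Next I would linearize \eqref{gdPSW} about this representative. Writing the autonomous dynamics purely in ${\bf W},{\bf M}$ via ${\bf Y}={\bf M}^{-1}{\bf W}{\bf X}$, and using $\delta({\bf M}^{-1})=-{\bf M}^{-1}\delta{\bf M}\,{\bf M}^{-1}$ together with the fixed-point identities ${\bf M}^{-1}{\bf W}{\bf C}={\bf W}$ and ${\bf F}{\bf C}{\bf F}^\top={\bf I}$, the perturbation equations for $\delta{\bf W}$ and $\delta{\bf M}$ become linear with constant coefficients. I would then pass to the eigenbasis by setting $a:=\delta{\bf W}{\bf V}$ and splitting its columns into those indexed by $S$ (a $k\times k$ block $a_S$) and those indexed by the complement $\bar S$ (a block $a_{\bar S}$). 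The key structural fact is that $a_{\bar S}$ decouples from $(a_S,\delta{\bf M})$ at linear order, because ${\bf W}$ has no component along the complementary eigendirections.

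For the decoupled complementary block the linearized equation reduces, entry by entry, to $\dot a_{i\ell}=2\left(\sigma_\ell/\sigma_i-1\right)a_{i\ell}$ for $i\in S$, $\ell\in\bar S$. These modes decay if and only if $\sigma_\ell<\sigma_i$ for every $i\in S$ and $\ell\in\bar S$, i.e. every retained eigenvalue exceeds every discarded one. Hence only $S=\{1,\ldots,k\}$, the principal subspace, can be stable, which is Part 3; the strictness $\sigma_\ell<\sigma_i$ requires the eigenvalue separation $\sigma_k>\sigma_{k+1}$, exactly as in Theorem \ref{mainLSPSP}.

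The remaining block couples $a_S$ with the symmetric $\delta{\bf M}$. Exploiting the symmetry of $\delta{\bf M}$, it decomposes into independent subsystems: a $2\times2$ system in $(a_{ii},\delta M_{ii})$ for each diagonal index $i$, and a $3\times3$ system in $(a_{ij},a_{ji},\delta M_{ij})$ for each unordered pair $\{i,j\}$. The diagonal systems have negative trace and positive determinant, hence are always stable. The crux is the $3\times3$ pair systems: I would form their characteristic polynomial $\lambda^3+c_2\lambda^2+c_1\lambda+c_0$ and apply Routh--Hurwitz. The expected, and the main technical, finding is that $c_0=-\det=0$, so there is a zero eigenvalue corresponding to the marginal rotation of filters within the principal subspace, precisely the degeneracy the theorem flags. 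With $c_0=0$ the nontrivial factor is $\lambda^2+c_2\lambda+c_1$, and a direct computation should give $c_1=2c_2$ with $c_2=\frac{1}{\sigma_i\sigma_j}\left[\frac{\sigma_i+\sigma_j}{\tau}-2(\sigma_i-\sigma_j)^2\right]$, so both nonzero eigenvalues have negative real part if and only if $c_2>0$, i.e. $\tau<\frac{\sigma_i+\sigma_j}{2(\sigma_i-\sigma_j)^2}$ for every pair, establishing Part 4. I expect the main obstacle to be the bookkeeping of the $3\times3$ blocks and, especially, recognizing the vanishing determinant: a naive application of Routh--Hurwitz expecting three strictly negative real parts would misdiagnose the marginal mode, so correctly identifying the zero eigenvalue with the rotational degeneracy is what reduces stability to the single binding trace condition $c_2>0$.
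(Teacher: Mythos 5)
Your proposal is correct and follows essentially the same route as the paper's proof (Appendix \ref{proofmainLSPSW}): fixed-point identities give whitening and a ${\bf C}$-invariant filter subspace, linearization decouples the out-of-subspace perturbations whose decay forces the principal subspace, and the in-subspace perturbations split into per-pair blocks carrying one zero (rotational) eigenvalue, with the remaining quadratic factor yielding exactly condition \eqref{tauPSW}. Your claimed invariants match the paper's matrix ${\bf H}^{ij}$ exactly — one checks $\det({\bf H}^{ij})=0$, $-{\rm Tr}({\bf H}^{ij})=\frac{1}{\sigma_i\sigma_j}\left[\frac{\sigma_i+\sigma_j}{\tau}-2(\sigma_i-\sigma_j)^2\right]$, and second symmetric function equal to $2c_2$ — the only differences being cosmetic: you linearize in raw $\delta{\bf W}$ entries in the eigenbasis of ${\bf C}$ and use an intertwining/invariant-subspace argument for Part 2, whereas the paper passes to normalized filters ${\bf R}={\bf F}{\bf C}^{1/2}$ and decomposes perturbations into rotational, symmetric, and out-of-subspace parts.
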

 \begin{proof} See Appendix \ref{proofmainLSPSW}. \end{proof}
 
 Based on Theorem \ref{mainLSPSW} we claim that, provided Algorithm \ref{offlinePSWalg} converges, this fixed point corresponds to a PSW of input data. Unlike the PSP case, the neural filters are not orthonormal.

 \subsection{Online algorithm for PSW}
 
 As before, we start by rewriting the min-max PSW objective \eqref{CSMMW2} as a sum of time-separable terms that can be optimized independently:
 \begin{align}\label{CSMMW2t}
 \min_{{\bf W}\in \mathbb{R}^{k\times n}}\max_{{\bf M}\in \mathbb{R}^{k\times k}}  \, \frac{1}{T} \sum_{t=1}^T l_{PSW,t} ({\bf W},{\bf M}).
 \end{align}  
 where
 \begin{align}
 l_{PSW,t} ({\bf W},{\bf M}) \equiv  {\rm Tr}\left({\bf W}^\top{\bf W}\right) - {\rm Tr}\left({\bf M}\right) +  \frac 12 \min_{{\bf y}_t\in \mathbb{R}^{k\times 1}} l_t({\bf W},{\bf M},{\bf y}_t).
 \end{align}
 and $l_t({\bf W},{\bf M},{\bf y}_t)$ is defined in Eq. \eqref{lyapunov}.
 In the online setting, Eq. \eqref{CSMMW2t} can be optimized by sequentially minimizing each ${ l}_{PSW,t}$. For each $t$, first, minimize \eqref{lyapunov} with respect to ${\bf y}_t$ for fixed ${\bf W}_t$ and ${\bf M}_t$, second, update ${\bf W}_t$ and ${\bf M}_t$ according to a gradient descent-ascent step for fixed ${\bf y}_t$:  
 \begin{align}\label{ogda2}
 {\bf W}_{t+1}&= {\bf W}_t - \eta_t \frac {\partial l_{PSW,t}({\bf W}_t,{\bf M}_t)}{{\bf W}_t}, \nonumber \\
 {\bf M}_{t+1} & = {\bf M}_t  +  \frac{\eta_{t}}{\tau} \frac {\partial l_{PSW,t}({\bf W}_t,{\bf M}_t)}{{\bf M}_t},
 \end{align}
 where $0<\eta_{t}<1$ is the ${\bf W}$ learning rate and $\tau>0$ is the ratio of ${\bf W}$ and ${\bf M}$ learning rates.
 
 As before, Proposition \ref{lem1} ensures that the online gradient descent-ascent updates, Eq. \eqref{ogda2}, follow from alternating optimization \citep{olshausen1996emergence,olshausen1997sparse,arora2015simple} of $l_{PSW,t}$. 
 \begin{algorithm}[H]
 	\caption{Online min-max PSW\label{onlinePSWalg}}
 	\begin{algorithmic}[1]
 		\STATE At $t=0$, initialize the synaptic weight matrices, ${\bf W}_1$ and ${\bf M}_1$. ${\bf M}_1$ must be symmetric and positive definite.
 		\STATE Repeat for each $t = 1,\ldots, T$
 		\begin{ALC@g}
 			\STATE Receive input ${\bf x}_t$
 			\STATE Neural activity:  Run until convergence
 			\begin{align}\label{gdCSMon}
 			\frac{d{\bf y}_t(\gamma)}{d\gamma} = {\bf W}_t{\bf x}_t -{\bf M}_t{\bf y}_t.
 			\end{align}
 			\STATE Plasticity:
 			Update synaptic weight matrices,
 			\begin{align}\label{wmCSM}
 			{\bf W}_{t+1} &= {\bf W}_{t} + 2\eta_{W,t} \left({\bf y}_t{\bf x}_t^\top-{\bf W}_t\right), \nonumber \\
 			{\bf M}_{t+1} &= {\bf M}_{t}+\eta_{M,t} \left({\bf y}_t{\bf y}_t^\top-{\bf I}_t\right).
 			\end{align}
 		\end{ALC@g}
 	\end{algorithmic}
 \end{algorithm}
 Algorithm \ref{onlinePSWalg} can be implemented by a biologically plausible single-layer neural network with lateral connections as in Algorithm \ref{onlinePSPalg}, Fig. \ref{Fig1}A. Updates to synaptic weights, Eq. \eqref{wmCSM}, are local, Hebbian/anti-Hebbian plasticity rules. An autapse-free network architecture, Fig \ref{Fig1}B, may be obtained using coordinate descent \citep{pehlevan2015MDS} in place of gradient descent in the neural dynamics stage \eqref{gdCSMon} (see Appendix \ref{coordPSW}). 
 
The lateral connections here are the Lagrange multipliers introduced in the offline problem, Eq. \eqref{CSMMW}. In the PSP network, they resulted from a variable transformation of the output covariance matrix. This difference caries over to the learning rules, where in Algorithm \ref{onlinePSWalg}, the lateral learning rule is enforcing the whitening of the output, but in Algorithm \ref{onlinePSPalg}, the lateral learning rule sets the lateral weight matrix to the output covariance matrix.

\section{Game theoretical interpretation of Hebbian/anti-Hebbian learning}\label{S4}
 
 In the original similarity matching objective, Eq. \eqref{SM}, the only variables are neuronal activities which, at the optimum, represent principal components. In Section \ref{S2}, we rewrote these objectives by introducing matrices {\bf W} and {\bf M} corresponding to synaptic connection weights, Eq. \eqref{SMMW}. Here, we eliminate neural activity variables altogether and arrive at a min-max formulation in terms of feedforward, ${\bf W}$, and lateral, ${\bf M}$, connection weight matrices only. This formulation lends itself to a game-theoretical interpretation. 
 
 Since in the offline PSP setting, optimal ${\bf M}^*$ in Eq. \eqref{SMMW2} is an invertible matrix (because ${\bf M}^*=\frac 1T {\bf Y}^*{{\bf Y}^*}^\top$, see also Appendix \ref{A1}), we can restrict our optimization to invertible matrices, ${\bf M}$, only. Then, we can optimize objective \eqref{SMMW} with respect to ${\bf Y}$ and substitute its optimal value ${\bf Y}^*={\bf M}^{-1}{\bf W}{\bf X}$ into \eqref{SMMW} and \eqref{SMMW2} to obtain:
 \begin{align}\label{SMMW3}
 &\min_{{\bf W}\in \mathbb{R}^{k\times n}}\max_{{\bf M}\in \mathbb{R}^{k\times k}}-\frac{2}{T}{\rm Tr}\left( {\bf X}^\top{\bf W}^\top{\bf M}^{-1} {\bf W}{\bf X}\right) + 2 {\rm Tr}\left({\bf W}^\top{\bf W}\right) -  {\rm Tr}\left({\bf M}^\top{\bf M}\right),\nonumber \\
 &\text{ s.t. ${\bf M}$ is invertible.}
 \end{align}  
 This min-max objective admits a game-theoretical interpretation where feedforward, ${\bf W}$, and lateral, ${\bf M}$, synaptic weight matrices oppose each other. To reduce the objective, feedforward synaptic weight vectors of each output neuron attempt to align with the direction of maximum variance of input data. However, if this was the only driving force then all output neurons would learn the same synaptic weight vectors and represent the same top principal component. At the same time, linear dependency between different feedforward synaptic weight vectors can be exploited by the lateral synaptic weights to increase the objective by cancelling the contributions of different components. To avoid this, the feedforward synaptic weight vectors become linearly independent and span the principal subspace.

 A similar interpretation can be given for PSW, where feedforward, ${\bf W}$, and lateral, ${\bf M}$, synaptic weight matrices oppose each other adversarially. 
 %
 %

\section{Novel formulations of dimensionality reduction using fractional exponents}\label{S5}

In this section, we point to a new class of dimensionality reduction objective functions that naturally follow from the min-max objectives \eqref{SMMW} and \eqref{SMMW2}. Eliminating both the neural activity variables, {\bf Y}, and the lateral connection weight matrix, {\bf M}, we arrive at optimization problems in terms of the feedforward weight matrix, {\bf W}, only. The rows of optimal {\bf W} form a non-orthogonal basis of the principal subspace. Such formulations of principal subspace problems involve fractional exponents of matrices and, to the best of our knowledge, have not been proposed previously.

 By replacing $\max_{\bf M}\min_{\bf Y}$ optimization in the min-max PSP objective, Eq. \eqref{SMMW2}, by its saddle point value (see Proposition \ref{minmaxPSP} in Appendix \ref{A1}) we find the following objective expressed solely in terms of ${\bf W}$:
\begin{align}\label{fractionalPSP}
\min_{{\bf W}\in \mathbb{R}^{k\times n}} {\rm Tr} \left(-\frac{3}{T^{2/3}}\left({\bf W}{\bf X}{\bf X}^\top{\bf W}^\top\right)^{2/3}+2{\bf W}{\bf W}^\top\right),
\end{align}
The rows of the optimal ${\bf W}$ are not principal eigenvectors, rather the rowspace of ${\bf W}$ spans the principal subspace.

By replacing $\max_{\bf M}\min_{\bf Y}$ optimization in the min-max PSW objective, Eq. \eqref{CSMMW2}, by its  optimal value (see Proposition \ref{minmaxPSW} in Appendix \ref{A2}):
\begin{align}\label{fractionalPSW}
\min_{{\bf W}\in \mathbb{R}^{k\times n}}  {\rm Tr} \left(-\frac{2}{T^{1/2}}\left({\bf W}{\bf X}{\bf X}^\top{\bf W}^\top\right)^{1/2}+{\bf W}{\bf W}^\top\right).
\end{align}
As before, the rows of the optimal ${\bf W}$ are not principal eigenvectors, rather the rowspace of ${\bf W}$ spans the principal eigenspace. 

We observe that the only material difference between Eqs. \eqref{fractionalPSP} and \eqref{fractionalPSW} is in the value of the fractional exponent. Based on this, we conjecture that any objective function of such form with a fractional exponent from a continuous range is optimized by ${\bf W}$ spanning the principal subspace. Such solutions would differ in the eigenvalues associated with the corresponding components.  

A supporting argument for our conjecture comes from the work of \cite{miao1998fast}, which studied the cost
\begin{align}\label{logPCA}
\min_{{\bf W}\in \mathbb{R}^{k\times n}}  {\rm Tr} \left(-\log\left({\bf W}{\bf X}{\bf X}^\top{\bf W}^\top\right)+{\bf W}{\bf W}^\top\right).
\end{align}
Eq. \ref{logPCA} can be seen as a limiting case of our conjecture, where the fractional exponent goes to zero. Indeed, \cite{miao1998fast} proved that the rows of optimal ${\bf W}$ are an orthonormal basis for the principal eigenspace.

\section{Numerical experiments}\label{S6}

\begin{figure}
	\centering
	\includegraphics[width = \textwidth]{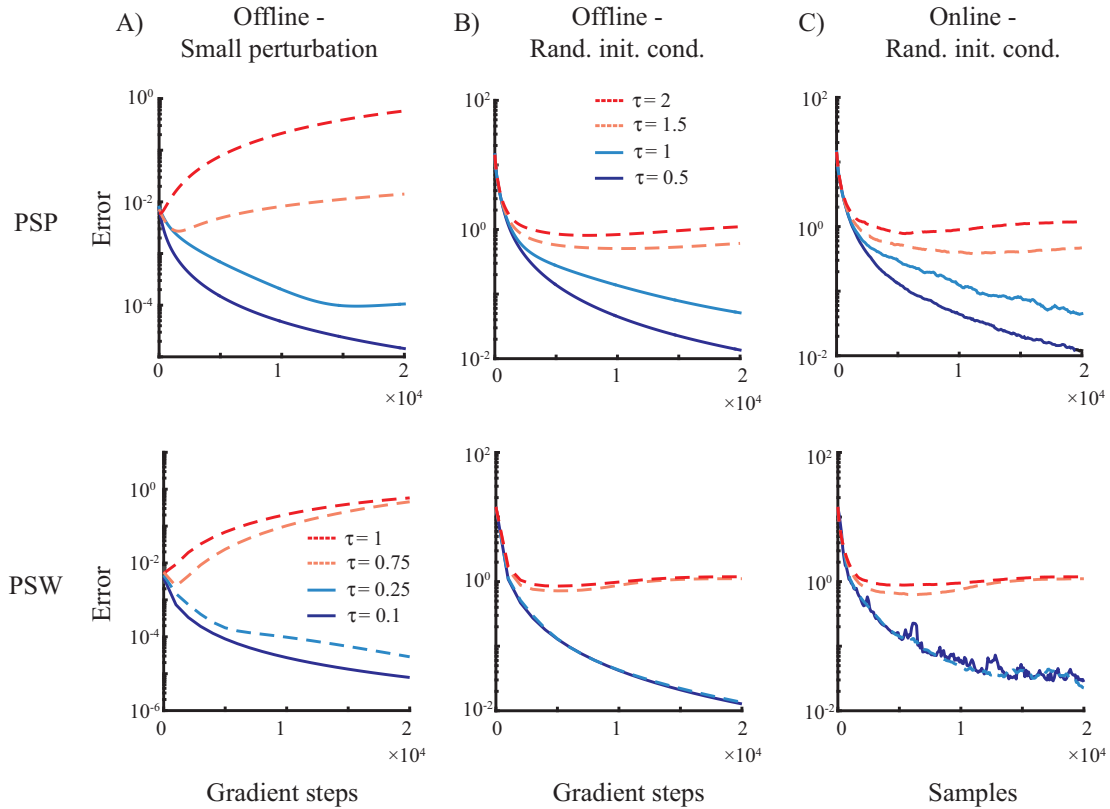}
	\vskip -120pt 
	\caption{Demonstration of the stability of the PSP (top row) and PSW (bottom row) algorithms. We constructed an $n=10$ by $T=2000$ data matrix ${\bf X}$ from its SVD, where the left and right singular vectors are chosen randomly, the top three singular values are set to $\lbrace \sqrt{3T},\sqrt{2T},\sqrt{T}\rbrace$ and the rest of the singular values are chosen uniformly in $[0,0.1\sqrt{T}]$. Learning rates were $\eta_{t} = 1/\left(10^3+t\right)$.  Errors were defined using deviation of the neural filters from their optimal values \citep{pehlevan2015MDS}. Let ${\bf U}$ be the $10\times 3$ matrix whose columns are the top 3 left singular vectors of ${\bf X}$. PSP error: $\left\Vert {\bf F}(t)^\top{\bf F}(t) -{\bf U}{\bf U}^\top\right\Vert_F$, PSW error: $\left\Vert {\bf F}(t)^\top{\bf F}(t) -{\bf U}{\bf S}{\bf U}^\top\right\Vert_F$, with ${\bf S} = {\rm diag}\left([1/3, 1/2, 1]\right)$ in MATLAB notation. Solid (dashed) lines indicate linearly stable (unstable) choices of $\tau$. A) Small perturbations to the fixed point. ${\bf W}$ and ${\bf M}$ matrices were initialized by adding a random Gaussian variable, $\mathcal{N}(0,10^{-6})$, elementwise to their fixed point values.  B) Offline algorithm, initialized with random ${\bf W}$ and ${\bf M}$ matrices. C) Online algorithm, initialized with the same initial condition as in B). A random column of ${\bf X}$ is processed at each time.
		\label{Fig2}}
\end{figure}

Next, we test our findings using a simple artificial dataset. We generated an $n=10$ dimensional dataset and we simulated our offline and online algorithms to reduce this dataset to $k=3$ dimensions, using different values of the parameter $\tau$. The results are plotted in Figs. \ref{Fig2}, \ref{FigM}, \ref{FigTau} and \ref{Figcomp} along with details of the simulations in the figures' caption.

Consistent with Theorems \ref{mainLSPSP} and \ref{mainLSPSW}, small perturbations to PSP and PSW fixed points decayed (solid lines) or grew (dashed lines) depending on the value of  $\tau$, Fig. \ref{Fig2}A. Offline simulations that start from random initial conditions converged to the PSP (or the PSW) solution if the fixed point was linearly stable, Fig. \ref{Fig2}B. Interestingly, the online algorithms' performance were very close to that of the offline, Fig. \ref{Fig2}C.

The error for linearly unstable simulations in Fig. \ref{Fig2} saturates rather than blowing up. This may seem at odds with Theorems \ref{mainLSPSP} and \ref{mainLSPSW}, which stated that if there is a stable fixed point of the dynamics, it should be the PSP/PSW solution. A closer look resolves this dilemma. In  Fig. \ref{FigM}, we plot the evolution of an element of the ${\bf M}$ matrix in the offline algorithms for stable and unstable choices of $\tau$. When the principal subspace is linearly unstable, the synaptic weights exhibit undamped oscillations. The dynamics seems to be confined to a manifold with a fixed distance (in terms of the error metric) from the principal subspace. That the error does not grow to infinity is a result of the stabilizing effect of min-max antagonism of the synaptic weights. Online algorithms behave similarly.

\begin{figure}
	\centering
	\includegraphics[width = \textwidth]{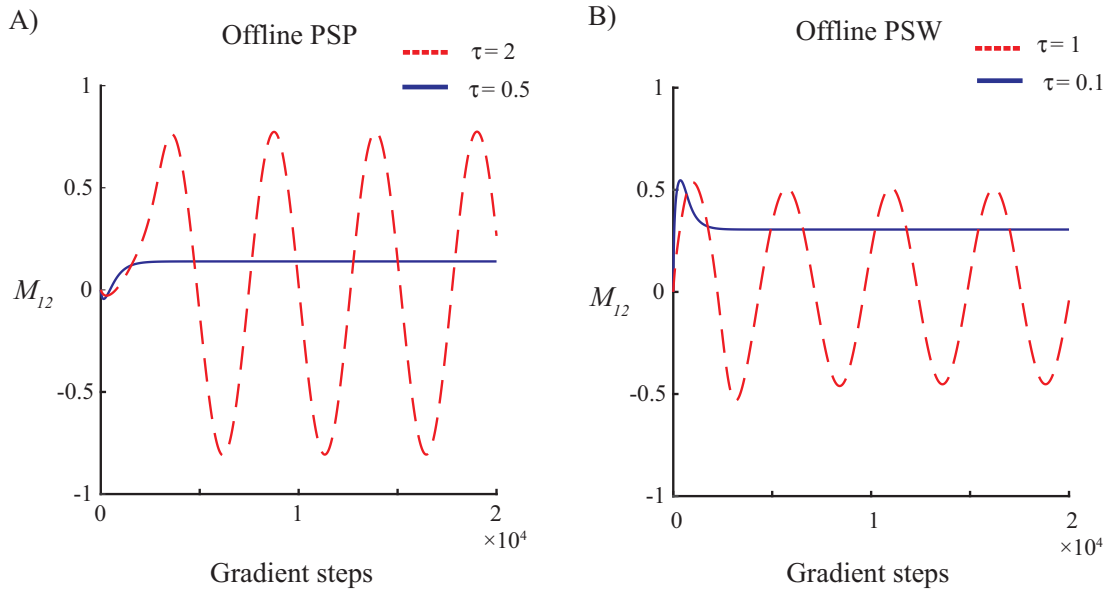}
	\caption{Evolution of a synaptic weight. Same dataset was used as in Fig. \ref{Fig2}. $\eta = 10^{-3}$. \label{FigM}}
\end{figure}

Next, we studied in detail the effect of $\tau$ parameter on the convergence. In the offline algorithm, we plot the error after a fixed number of gradient steps, as a function of $\tau$. For PSP, there is an optimal $\tau$. Decreasing $\tau$ beyond the optimal value doesn't lead to a degradation in performance, however increasing it leads to a rapid increase in the error. For PSW, there is a plateau of low error for low values of $\tau$ but a rapid increase as one approaches the linear instability threshold. Online algorithms behave similarly.

\begin{figure}
	\centering
	\includegraphics[width = \textwidth]{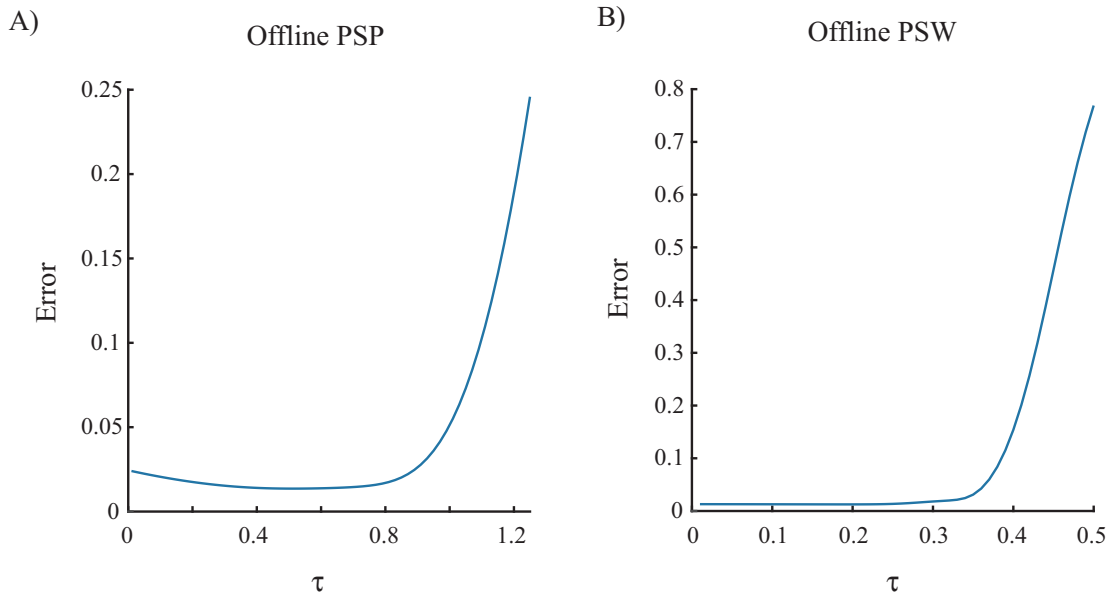}
	\caption{Effect of $\tau$ of performance. Error after $2\times 10^4$ gradient steps are plotted as a function of different choices of $\tau$. Same dataset was used as in Fig. \ref{Fig2} with same network initalization and learning rates. Both curves start from $\tau =0.01$ and go to the maximum value allowed for linear stability. \label{FigTau}}
\end{figure}

Finally, we compared the performance of our online PSP algorithm to neural PSP algorithms with heuristic learning rules such as the Subspace Network \citep{oja1989neural} and the Generalized Hebbian Algorithm (GHA) \citep{sanger1989optimal}, on the same dataset. We found that our algorithm converges much faster (Fig. \ref{Figcomp}). Previously, the original similarity matching network \citep{pehlevan2015MDS}, which is a special case of the online PSP algorithm of this paper, was shown to converge faster than the APEX \citep{kung1994adaptive} and F\"oldiak's  \citep{foldiak1989adaptive} networks.

\begin{figure}
	\centering
	\includegraphics{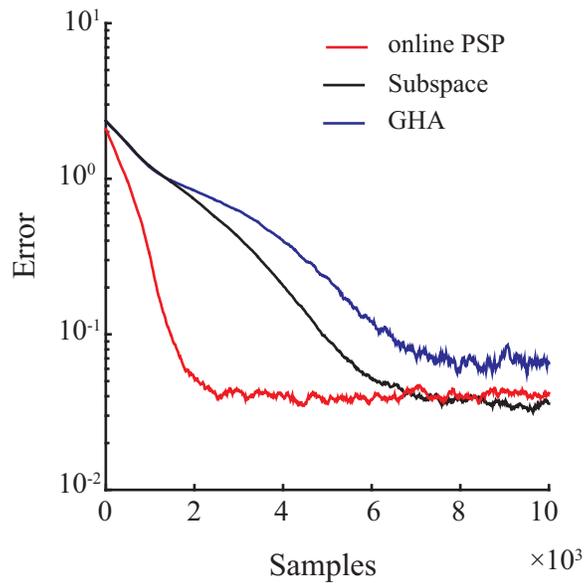}
	\caption{Comparison of the online PSP algorithm with the Subspace Network \citep{oja1989neural} and the GHA \citep{sanger1989optimal}. The dataset and the error metric is as in Fig. \ref{Fig2}. For fairness of comparison, the learning rates in all networks were set to $\eta = 10^{-3}$. $\tau = 1/2$ for the online PSP algorithm \label{Figcomp}. Feedforward connectivity matrices were initialized randomly. For the online PSP algorithm, lateral connectivity matrix was initialized to the identity matrix. Curves show averages over 10 trials.}
\end{figure}

\section{Discussion}

In this paper, through transparent variable substitutions, we demonstrated why biologically plausible neural networks can be derived from similarity matching objectives, mathematically formalized the adversarial relationship between Hebbian feedforward and anti-Hebbian lateral connections using min-max optimization lending itself to a game-theoretical interpretation, and formulated dimensionality reduction tasks as optimizations of fractional powers of matrices. The formalism we developed should generalize to unsupervised tasks other than dimensionality reduction and could provide a theoretical foundation for both natural and artificial neural networks.

In comparing our networks with biological ones, most importantly, our networks rely only on local learning rules that can be implemented by synaptic plasticity. While Hebbian learning is famously observed in neural circuits \citep{bliss1973longa,bliss1973long}, our networks also require anti-Hebbian learning, which can be interpreted as the long-term potentiation of inhibitory postsynaptic potentials. Experimentally, such long-term potentiation can arise from pairing  action potentials in inhibitory neurons with subthreshold
depolarization of postsynaptic pyramidal neurons \citep{komatsu1994age,maffei2006potentiation}. However, plasticity in inhibitory synapses does not have to be Hebbian, i.e. depend on the correlation between pre- and postsynaptic activity  \citep{kullmann2012plasticity}.

To make progress, we had to make several simplifications sacrificing biological realism. In particular, we assumed that neuronal activity is a continuous variable which would correspond to membrane depolarization (in graded potential neurons) or firing rate (in spiking neurons). We ignored the nonlinearity of the neuronal input-output function. Such linear regime could be implemented via a resting state bias (in graded potential neurons) or resting firing rate (in spiking neurons).  

The applicability of our networks as models of biological networks can be judged by experimentally testing the following predictions. First, we predict a relationship between the feedforward and lateral synaptic weight matrices which could be tested using modern connectomics datasets. Second, we suggest that similarity of output activity matches that of the input which could be tested by neuronal population activity measurements using calcium imaging. 

Often the choice of a learning rate is crucial to the learning performance of neural networks. Here, we encountered a nuanced case where the ratio of  feedforward and lateral weights, $\tau$, affects the learning performance significantly. First, there is a maximum value of such ratio, beyond which the principal subspace solution is linearly unstable. The maximum value depends on the principal eigenvalues, but for PSP, $\tau \leq 1/2$ is always linearly stable. For PSW there isn't an always safe choice. Having the same learning rates for feedforward and lateral weights, $\tau=1$, may actually be unstable. Second, linear stability is not the only thing that affects performance. In simulations, for PSP, we observed that there is an optimal value of $\tau$. For PSW, decreasing $\tau$ seems to increase performance until a plateau is reached. This difference between PSP and PSW may be attributed to the difference of origins of lateral connectivity. In PSW algorithms, lateral weights originate from Lagrange multipliers enforcing an optimization constraint. Low $\tau$, meaning higher lateral learning rates, force the network to satisfy the constraint during the whole evolution of the algorithm.

Based on these observation, we can make practical suggestions for the $\tau$ parameter. For PSP, $\tau = 1/2$ seems to be a good choice, which is also preferred from another derivation of an online similarity matching algorithm \citep{pehlevan2015MDS}. For PSW, the smaller the $\tau$, the better it is, although one should make sure that the lateral weight learning rate $\eta/\tau$ is still sufficiently small. 



\subsubsection*{Acknowledgments}
We thank Alex Genkin,  Sebastian Seung, Mariano Tepper and Jonathan Zung for discussions.

\appendix

\pagebreak
\section{Proof of strong min-max property for PSP objective}\label{A1}

Here we show that minimization with respect to ${\bf Y}$ and maximization with respect to ${\bf M}$ can be exchanged in Eq. \eqref{SMMW}. We will make use of the following min-max theorem \citep{boyd2004convex}, for which we give a proof for completeness:
\begin{Th} Let $f:\mathbb{R}^n\times\mathbb{R}^m\longrightarrow \mathbb{R}$. Suppose the saddle-point property holds, i.e. $\exists {\bf a}^* \in \mathbb{R}^n$, ${\bf b}^*\in \mathbb{R}^m$ such that $\forall {\bf a} \in \mathbb{R}^n$, ${\bf b}\in \mathbb{R}^m$
\begin{align}\label{sp}
f({\bf a}^*, {\bf b}) \leq f({\bf a}^*, {\bf b}^*) \leq f({\bf a}, {\bf b}^*).
\end{align}
 Then,
\begin{align}
\max_{{\bf b}} \min_{{\bf a}} f({\bf a},{\bf b}) = \min_{\bf a} \max_{\bf b}  f({\bf a},{\bf b}) = f({\bf a}^*, {\bf b}^*) .  
\end{align}
\end{Th}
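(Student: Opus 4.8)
The plan is to prove the two claimed equalities by squeezing both $\max_{\bf b}\min_{\bf a} f$ and $\min_{\bf a}\max_{\bf b} f$ between the single value $f({\bf a}^*,{\bf b}^*)$, combining the saddle-point hypothesis with the elementary weak min-max inequality that holds for \emph{any} function regardless of saddle points.

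First I would establish the weak inequality $\max_{\bf b}\min_{\bf a} f({\bf a},{\bf b}) \le \min_{\bf a}\max_{\bf b} f({\bf a},{\bf b})$. For arbitrary fixed ${\bf a}$ and ${\bf b}$ one has the trivial sandwich $\min_{{\bf a}'} f({\bf a}',{\bf b}) \le f({\bf a},{\bf b}) \le \max_{{\bf b}'} f({\bf a},{\bf b}')$, so the left-hand quantity (a function of ${\bf b}$ alone) never exceeds the right-hand quantity (a function of ${\bf a}$ alone) for any pairing. Taking the maximum over ${\bf b}$ on the left and then the minimum over ${\bf a}$ on the right preserves the inequality and yields the claim.

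Next I would exploit the two halves of the saddle-point inequality \eqref{sp} separately. From $f({\bf a}^*,{\bf b}) \le f({\bf a}^*,{\bf b}^*)$ for all ${\bf b}$, with equality attained at ${\bf b}={\bf b}^*$, I get $\max_{\bf b} f({\bf a}^*,{\bf b}) = f({\bf a}^*,{\bf b}^*)$, whence $\min_{\bf a}\max_{\bf b} f({\bf a},{\bf b}) \le f({\bf a}^*,{\bf b}^*)$. Symmetrically, from $f({\bf a}^*,{\bf b}^*) \le f({\bf a},{\bf b}^*)$ for all ${\bf a}$, I get $\min_{\bf a} f({\bf a},{\bf b}^*) = f({\bf a}^*,{\bf b}^*)$, whence $\max_{\bf b}\min_{\bf a} f({\bf a},{\bf b}) \ge f({\bf a}^*,{\bf b}^*)$.

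Finally I would chain the three estimates into $f({\bf a}^*,{\bf b}^*) \le \max_{\bf b}\min_{\bf a} f \le \min_{\bf a}\max_{\bf b} f \le f({\bf a}^*,{\bf b}^*)$, forcing all quantities to coincide and proving the theorem. There is no serious obstacle here; the only point requiring mild care is that the relevant extrema are genuinely \emph{attained} at ${\bf a}^*$ and ${\bf b}^*$ so that one may legitimately write $\max$ and $\min$ rather than $\sup$ and $\inf$ — but this is exactly what the saddle-point hypothesis supplies.
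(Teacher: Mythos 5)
Your proof is correct and follows essentially the same route as the paper's: both combine the weak inequality $\max_{\bf b}\min_{\bf a} f \leq \min_{\bf a}\max_{\bf b} f$ with the two halves of the saddle-point property, which pin down $\max_{\bf b} f({\bf a}^*,{\bf b}) = f({\bf a}^*,{\bf b}^*) = \min_{\bf a} f({\bf a},{\bf b}^*)$, and then chain the inequalities to force equality. The only difference is cosmetic ordering of the chain.
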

\begin{proof} 
$\forall{\bf c} \in \mathbb{R}^n$, $\min_{{\bf a}}\max_{{\bf b}} f({\bf a},{\bf b}) \leq \max_{{\bf b}} f({\bf c}, {\bf b})$, which implies
\begin{align}
\min_{{\bf a}}\max_{{\bf b}} f({\bf a},{\bf b}) \leq \max_{{\bf b}} f({\bf a}^*, {\bf b}) =  f({\bf a}^*, {\bf b}^*)= \min_{{\bf a}} f({\bf a}, {\bf b}^*) \leq  \max_{{\bf b}}  \min_{{\bf a}} f({\bf a},{\bf b}). 
\end{align}
Since $\max_{{\bf b}}  \min_{{\bf a}} f({\bf a}, {\bf b}) \leq  \min_{{\bf a}}\max_{{\bf b}} f({\bf a},{\bf b})$ is always true, we get an equality. 
\end{proof}

Now, we present the main result of this section.
\begin{prop}\label{minmaxPSP}Define
\begin{align}\label{fMW}
f\left({\bf Y},{\bf M},{\bf A}\right):={\rm Tr}\left( -\frac 4T{\bf A}^\top{\bf Y} + \frac 2T{\bf Y}^\top{\bf M}{\bf Y} \right) - \rm{Tr}\left({\bf M}^\top{\bf M}\right),
\end{align}  
where ${\bf Y}$, ${\bf M}$ and ${\bf A}$ are arbitrary sized, real-valued matrices. $f$ obeys a strong min-max property:
\begin{align}
 \min_{\bf Y}\max_{\bf M}f\left({\bf Y},{\bf M},{\bf A}\right) =  \max_{\bf M}\min_{\bf Y} f\left({\bf Y},{\bf M},{\bf A}\right) = -   \frac{3}{T^{2/3}}\rm{Tr}\left(\left({\bf A}{\bf A}^\top\right)^{2/3}\right) .
\end{align}
\end{prop}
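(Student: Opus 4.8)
The plan is to apply the min-max theorem stated just above by exhibiting an explicit saddle point $(\mathbf{Y}^*,\mathbf{M}^*)$ of $f$ and verifying the two saddle inequalities \eqref{sp} directly; this is cleaner than trying to justify the interchange of nested optimizations abstractly. The candidate is found by solving the inner problems in turn. First I would minimize over $\mathbf{Y}$: stationarity reads $-\frac{4}{T}\mathbf{A}+\frac{2}{T}(\mathbf{M}+\mathbf{M}^\top)\mathbf{Y}=0$, so for symmetric positive definite $\mathbf{M}$ one gets $\mathbf{Y}^*=\mathbf{M}^{-1}\mathbf{A}$. Substituting this back leaves $-\frac{2}{T}{\rm Tr}(\mathbf{M}^{-1}\mathbf{A}\mathbf{A}^\top)-{\rm Tr}(\mathbf{M}^\top\mathbf{M})$; maximizing over symmetric positive definite $\mathbf{M}$ via $\frac{1}{T}\mathbf{M}^{-1}\mathbf{A}\mathbf{A}^\top\mathbf{M}^{-1}=\mathbf{M}$ gives the cubic relation $\mathbf{M}^3=\frac{1}{T}\mathbf{A}\mathbf{A}^\top$, hence the candidate $\mathbf{M}^*=T^{-1/3}(\mathbf{A}\mathbf{A}^\top)^{1/3}$ and $\mathbf{Y}^*=(\mathbf{M}^*)^{-1}\mathbf{A}$.

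With this candidate fixed, I would check \eqref{sp}. For the right inequality $f(\mathbf{Y}^*,\mathbf{M}^*)\leq f(\mathbf{Y},\mathbf{M}^*)$: as a function of $\mathbf{Y}$ alone, $f(\cdot,\mathbf{M}^*)$ is convex because $\mathbf{M}^*$ is positive semidefinite, so the stationary point $\mathbf{Y}^*=(\mathbf{M}^*)^{-1}\mathbf{A}$ is a global minimizer. For the left inequality $f(\mathbf{Y}^*,\mathbf{M})\leq f(\mathbf{Y}^*,\mathbf{M}^*)$: as a function of $\mathbf{M}$ alone, $f(\mathbf{Y}^*,\cdot)$ is the sum of terms linear in $\mathbf{M}$ and the strictly concave term $-{\rm Tr}(\mathbf{M}^\top\mathbf{M})=-\|\mathbf{M}\|_F^2$, hence strictly concave over all real $k\times k$ matrices, with unique maximizer at the stationary point $\mathbf{M}=\frac{1}{T}\mathbf{Y}^*\mathbf{Y}^{*\top}$. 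The one genuine computation is to verify this coincides with $\mathbf{M}^*$: using $\mathbf{A}\mathbf{A}^\top=T(\mathbf{M}^*)^3$ from the cubic relation, $\frac{1}{T}\mathbf{Y}^*\mathbf{Y}^{*\top}=\frac{1}{T}(\mathbf{M}^*)^{-1}\mathbf{A}\mathbf{A}^\top(\mathbf{M}^*)^{-1}=(\mathbf{M}^*)^{-1}(\mathbf{M}^*)^3(\mathbf{M}^*)^{-1}=\mathbf{M}^*$. Both inequalities then hold, so by the theorem the two orders of optimization agree and equal $f(\mathbf{Y}^*,\mathbf{M}^*)$.

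It remains only to evaluate the saddle value. Writing $\mathbf{B}=\mathbf{A}\mathbf{A}^\top$, the substitutions give ${\rm Tr}((\mathbf{M}^*)^{-1}\mathbf{B})=T^{1/3}{\rm Tr}(\mathbf{B}^{2/3})$ and ${\rm Tr}((\mathbf{M}^*)^2)=T^{-2/3}{\rm Tr}(\mathbf{B}^{2/3})$, so that $f(\mathbf{Y}^*,\mathbf{M}^*)=-\frac{2}{T}\cdot T^{1/3}{\rm Tr}(\mathbf{B}^{2/3})-T^{-2/3}{\rm Tr}(\mathbf{B}^{2/3})=-\frac{3}{T^{2/3}}{\rm Tr}\big((\mathbf{A}\mathbf{A}^\top)^{2/3}\big)$, matching the claim.

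The main obstacle is not any single inequality but the case where $\mathbf{A}\mathbf{A}^\top$ is rank deficient, since then $\mathbf{M}^*$ is singular and $\mathbf{Y}^*=(\mathbf{M}^*)^{-1}\mathbf{A}$ is not literally defined. I expect this to be resolvable either by restricting the maximization to positive definite $\mathbf{M}$ (as the offline analysis already does, where $\mathbf{M}^*=\frac{1}{T}\mathbf{Y}^*\mathbf{Y}^{*\top}$ is automatically invertible), or by a continuity argument: replace $\mathbf{A}\mathbf{A}^\top$ by $\mathbf{A}\mathbf{A}^\top+\epsilon\mathbf{I}$, run the saddle construction, and let $\epsilon\to0$, using that ${\rm Tr}((\mathbf{A}\mathbf{A}^\top)^{2/3})$ depends continuously on the matrix. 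A secondary point worth stating explicitly is that the concavity argument for the left inequality must be run over all matrices $\mathbf{M}$, not merely symmetric ones, and that it is precisely the strict concavity of $-\|\mathbf{M}\|_F^2$ that forces the optimizer to be the symmetric positive semidefinite matrix $\mathbf{M}^*$.
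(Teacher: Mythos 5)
Your construction and verification coincide with the paper's own proof whenever $\mathbf{A}\mathbf{A}^\top$ is invertible: both arguments exhibit the saddle point $\mathbf{M}^* = T^{-1/3}(\mathbf{A}\mathbf{A}^\top)^{1/3}$, check the two saddle inequalities (your convexity/concavity-plus-stationarity argument is equivalent in substance to the paper's explicit computation of the differences as $\Vert \mathbf{M}^*-\mathbf{M}\Vert_F^2$ and $\tfrac{2}{T}\mathrm{Tr}\left((\mathbf{Y}-\mathbf{Y}^*)^\top\mathbf{M}^*(\mathbf{Y}-\mathbf{Y}^*)\right)$), and evaluate the common value. The genuine gap is the rank-deficient case, which you flag but do not close, and neither of your proposed repairs works as stated. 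Restricting the maximization to positive definite $\mathbf{M}$ changes the statement being proved: the proposition asserts equality of $\min\max$ and $\max\min$ over \emph{all} real $\mathbf{M}$, and the min-max theorem needs the inequality $f(\mathbf{Y}^*,\mathbf{M},\mathbf{A}) \le f(\mathbf{Y}^*,\mathbf{M}^*,\mathbf{A})$ against every $\mathbf{M}$; moreover $\mathbf{M}^*$ is not ``automatically invertible'' here --- at the saddle $(\mathbf{M}^*)^3 = \tfrac{1}{T}\mathbf{A}\mathbf{A}^\top$, so $\mathbf{M}^*$ is singular exactly when $\mathbf{A}\mathbf{A}^\top$ is (the Section 4 remark you cite concerns the offline PSP setting, not an arbitrary $\mathbf{A}$). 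The continuity argument is also incomplete: continuity of $\mathrm{Tr}\left((\mathbf{A}\mathbf{A}^\top)^{2/3}\right)$ gives convergence of the claimed formula, but you would still need to show that the min-max and max-min \emph{values} of the unperturbed problem are the limits of the perturbed ones; since both optimizations range over unbounded domains, that interchange of limits requires a separate coercivity/uniformity argument.

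The paper closes the gap without any perturbation by never inverting $\mathbf{M}^*$. It takes as the candidate pair any solution of the joint stationarity system $\mathbf{M}^* = \tfrac{1}{T}\mathbf{Y}^*\mathbf{Y}^{*\top}$, $\mathbf{M}^*\mathbf{Y}^* = \mathbf{A}$, which exists for every $\mathbf{A}$: writing an SVD $\mathbf{A} = \sum_i \sigma_i \mathbf{u}_i\mathbf{v}_i^\top$, the pair $\mathbf{Y}^* = T^{1/3}\sum_i \sigma_i^{1/3}\mathbf{u}_i\mathbf{v}_i^\top$, $\mathbf{M}^* = T^{-1/3}(\mathbf{A}\mathbf{A}^\top)^{1/3}$ solves it even when $\mathbf{A}$ is rank-deficient. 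All subsequent steps --- the value $f(\mathbf{Y}^*,\mathbf{M}^*,\mathbf{A}) = -3\,\mathrm{Tr}\left((\mathbf{M}^*)^2\right)$ and both saddle inequalities --- are then carried out using only the substitution $\mathbf{A} = \mathbf{M}^*\mathbf{Y}^*$ and positive semidefiniteness of $\mathbf{M}^*$, so the argument is uniform in the rank of $\mathbf{A}$. If you replace your definition $\mathbf{Y}^* = (\mathbf{M}^*)^{-1}\mathbf{A}$ by this implicit one, your proof goes through essentially verbatim.
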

\begin{proof}
We will show that the saddle-point property holds for Eq. \eqref{fMW}. Then the result follows from Theorem 1.

If the saddle point exists,  it is when $\nabla f = 0$,
\begin{align}\label{sol}
{\bf M}^* &= \frac 1T {\bf Y}^*{{\bf Y}^*}^\top, \nonumber \\
 {\bf M}^*{\bf Y}^* &= {\bf A}.
\end{align}
Note that ${\bf M}^*$ is symmetric and positive semidefinite. Multiplying the first equation by ${\bf M}^*$ on the left and the right, and using the the second equation, we arrive at
\begin{align}
{{\bf M}^*}^3 = \frac 1T{\bf A}{\bf A}^\top.
\end{align}
Solutions to Eq. \eqref{sol} are not unique, because ${\bf M}^*$ may not be invertible depending on ${\bf A}$. However, all solutions give the same value of $f$:
\begin{align}\label{val}
f\left( {\bf Y}^*, {\bf M}^*,{\bf A}\right) &= {\rm Tr}\left( -\frac 4T{\bf A}^\top {\bf Y}^* + \frac 2T{{\bf Y}^*}^\top {\bf M}^* {\bf Y}^* \right) - \rm{Tr}\left({{\bf M}^*}^2\right) \nonumber \\
&={\rm Tr}\left( -\frac 4T{{\bf Y}^*}^\top{\bf M}^*{\bf Y}^* + \frac 2T{{\bf Y}^*}^\top{\bf M}^*{\bf Y}^* \right) - \rm{Tr}\left({{\bf M}^*}^2\right) \nonumber \\
&=- 3{\rm Tr}\left({{\bf M}^*}^2\right)= - \frac{3}{T^{2/3}}{\rm Tr}\left(\left({\bf A}{\bf A}^\top\right)^{2/3}\right).
\end{align}

Now, we check if the saddle-point property, Eq. \eqref{sp}, holds. The first inequality is satisfied:
\begin{align}\label{ineq1}
f\left( {\bf Y}^*, {\bf M}^*,{\bf A}\right) - f\left( {\bf Y}^*, {\bf M},{\bf A}\right) &= \rm{Tr} \left(\frac 2T {{\bf Y}^*}^\top \left({\bf M}^*-{\bf M}\right){\bf Y}^*\right) - \rm{Tr}\left({{\bf M}^*}^2\right) + \rm{Tr}\left({\bf M}^\top{\bf M}\right)\nonumber \\
&=-2\,\rm{Tr} \left({\bf M}^*{\bf M}\right) + \rm{Tr}\left({{\bf M}^*}^2\right) + \rm{Tr}\left({\bf M}^\top{\bf M}\right)\nonumber \\
&= \left\Vert {{\bf M}^*}-{\bf M}\right\Vert_F^2 \geq 0.
\end{align}
The second inequality is also satisfied:
\begin{align}\label{ineq2}
f\left({\bf Y},{\bf M}^*,{\bf A}\right) - f\left( {\bf Y}^*,{\bf M}^*,{\bf A}\right) &= {\rm Tr}\left(-\frac 4T {\bf A}^\top\left({\bf Y}-{\bf Y}^*\right)+ \frac 2T{\bf Y}^\top {\bf M}^*{\bf Y} -\frac 2T {{\bf Y}^*}^\top {\bf M}^*{\bf Y}^* \right) \nonumber \\
&=  {\rm Tr}\left(-\frac 4T {{\bf Y}^*}^\top{\bf M}^*{\bf Y}+\frac 2T{\bf Y}^\top {\bf M}^*{\bf Y} +\frac 2T{{\bf Y}^*}^\top{\bf M}^*{\bf Y}^* \right) \nonumber \\
&=  \frac 2T\,{\rm Tr}\left(\left({\bf Y}-{\bf Y}^*\right)^\top {\bf M}^* \left({\bf Y}-{\bf Y}^*\right) \right) \geq 0,
 \end{align}
where the last line follows form ${\bf M}^*$ being positive semidefinite.

Eq.s \eqref{ineq1} and \eqref{ineq2} show that the saddle-point property \eqref{sp} holds, and therefore $\max$ and $\min$ can be exchanged and the value of $f$ at the saddle-point is $f\left({\bf Y}^*,{\bf M}^*,{\bf A}\right)=- \frac{3}{T^{2/3}}\rm{Tr}\left(\left({\bf A}{\bf A}^\top\right)^{2/3}\right)$.
\end{proof}

\pagebreak

\section{Taking a derivative using a chain rule}\label{chainrule}

\begin{prop} \label{lem1} Suppose a differentiable, scalar function $H({\bf a}_1,\ldots,{\bf a}_m)$, where ${\bf a}_i \in \mathbb{R}^{d_i}$ with arbitrary $d_i$.  Assume a finite minimum with respect to ${\bf a}_m$ exists for a given set of $\lbrace{\bf a}_1,\ldots,{\bf a}_{m-1}\rbrace$:
	\begin{align}
	h({\bf a}_1,\ldots,{\bf a}_{m-1}) = \min_{{\bf a}_m} H({\bf a}_1,\ldots,{\bf a}_m),
	\end{align}
	and the optimal ${\bf a}_m^* = \argmin_{{\bf a}_m} H({\bf a}_1,\ldots,{\bf a}_m)$ is a stationary point
	\begin{align}\label{st}
	\left. \frac{\partial H}{\partial {\bf a}_m}\right |_{\lbrace{\bf a}_1,\ldots,{\bf a}_{m-1},{\bf a}_m^*\rbrace} = 0.
	\end{align}
	Then, for $i=1,\ldots, m-1$
	\begin{align}
	\left. \frac{\partial h}{\partial {\bf a}_i}\right|_ {\lbrace{\bf a}_1,\ldots,{\bf a}_{m-1}\rbrace} = \left.\frac{\partial H}{\partial {\bf a}_i}\right |_{\lbrace{\bf a}_1,\ldots,{\bf a}_{m-1},{\bf a}_m^*\rbrace} 
	\end{align}
\end{prop}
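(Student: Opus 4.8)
The plan is to read $h$ as the function obtained by substituting the optimizer back into $H$, and then to differentiate that composition by the chain rule, exactly as the appendix title suggests. Writing ${\bf a}_m^\ast = {\bf a}_m^\ast({\bf a}_1,\ldots,{\bf a}_{m-1})$ for the minimizer, the definition of $h$ reads
\begin{align*}
h({\bf a}_1,\ldots,{\bf a}_{m-1}) = H\!\left({\bf a}_1,\ldots,{\bf a}_{m-1},{\bf a}_m^\ast({\bf a}_1,\ldots,{\bf a}_{m-1})\right),
\end{align*}
so the right-hand side depends on each ${\bf a}_i$ in two ways: explicitly through the $i$-th slot, and implicitly through the optimizer ${\bf a}_m^\ast$. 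The whole content of the proposition is that only the explicit dependence survives differentiation.

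First I would apply the multivariable chain rule to this composition, which yields
\begin{align*}
\left.\frac{\partial h}{\partial {\bf a}_i}\right|_{\{{\bf a}_1,\ldots,{\bf a}_{m-1}\}} = \left.\frac{\partial H}{\partial {\bf a}_i}\right|_{\{{\bf a}_1,\ldots,{\bf a}_{m-1},{\bf a}_m^\ast\}} + \left.\frac{\partial H}{\partial {\bf a}_m}\right|_{\{{\bf a}_1,\ldots,{\bf a}_{m-1},{\bf a}_m^\ast\}}\frac{\partial {\bf a}_m^\ast}{\partial {\bf a}_i}.
\end{align*}
The second, implicit term is where the hypotheses pay off: by the stationarity assumption \eqref{st}, the factor $\partial H/\partial {\bf a}_m$ vanishes at the optimizer, so this term drops out no matter what the Jacobian $\partial {\bf a}_m^\ast/\partial {\bf a}_i$ is. What remains is precisely the claimed identity, for every $i=1,\ldots,m-1$.

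The main obstacle is a regularity gap: to write the chain rule and the Jacobian $\partial {\bf a}_m^\ast/\partial {\bf a}_i$ at all, one needs ${\bf a}_m^\ast$ to be a differentiable function of $({\bf a}_1,\ldots,{\bf a}_{m-1})$, which is not granted by the bare hypotheses. I would close this gap locally by the implicit function theorem applied to the stationarity equation $\partial H/\partial {\bf a}_m = 0$, whose applicability requires only that the Hessian block $\partial^2 H/\partial {\bf a}_m^2$ be nonsingular at the minimizer, a mild nondegeneracy that holds at a genuine isolated minimum. For the uses in this paper the point is moot: the relevant $H$ (namely $L_{PSP}$ and $L_{PSW}$) is quadratic in ${\bf a}_m = {\bf Y}$ with positive-definite Hessian whenever ${\bf M}$ is positive definite, so the optimizer is the explicit smooth map ${\bf Y}^\ast = {\bf M}^{-1}{\bf W}{\bf X}$ and the chain-rule route is fully rigorous. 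If one preferred to avoid any smoothness assumption on the argmin, an alternative is a sandwiching (envelope) argument: bound $h({\bf a}_i+\delta)-h({\bf a}_i)$ above by evaluating $H$ at the \emph{fixed} base-point optimizer and below by the optimizer of the perturbed problem, and let $\delta\to 0$; continuity of the argmin together with joint continuity of $\partial H/\partial {\bf a}_i$ forces both one-sided bounds to the common limit $\partial H/\partial {\bf a}_i$, recovering the same identity.
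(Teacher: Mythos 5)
Your proof is correct and follows essentially the same route as the paper: apply the chain rule to $h = H(\cdot,{\bf a}_m^\ast(\cdot))$ and use the stationarity condition \eqref{st} to kill the implicit term, regardless of the Jacobian $\partial {\bf a}_m^\ast/\partial {\bf a}_i$. Your additional care about the differentiability of the argmin (via the implicit function theorem, or the envelope-style sandwich) addresses a regularity point that the paper's proof silently assumes, and is indeed harmless in the paper's applications since there ${\bf a}_m^\ast = {\bf M}^{-1}{\bf W}{\bf X}$ is an explicit smooth map.
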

\begin{proof} The result follows from application of the chain rule and the stationarity of the minimum:
	\begin{align}
	\left. \frac{\partial h}{\partial {\bf a}_i}\right|_ {\lbrace{\bf a}_1,\ldots,{\bf a}_{m-1}\rbrace} = \left.\frac{\partial H}{\partial {\bf a}_i}\right |_{\lbrace{\bf a}_1,\ldots,{\bf a}_{m-1},{\bf a}_m^*\rbrace} +\left(\left.\frac{\partial H}{\partial {\bf a}_m}\right |_{\lbrace{\bf a}_1,\ldots,{\bf a}_{m-1},{\bf a}_m^*\rbrace} \right)^\top \left. \frac{\partial {\bf a}_m}{\partial {\bf a}_i}\right|_ {\lbrace{\bf a}_1,\ldots,{\bf a}_{m-1}\rbrace}
	\end{align}
	where the second term is zero due to Eq. \eqref{st}.
\end{proof}

\pagebreak

\section{Proof of Theorem \ref{mainLSPSP}}\label{LSPSP}

Here we prove Theorem \ref{mainLSPSP} using methodology from \citep{pehlevan2015MDS}. 

The fixed points of Eq. \eqref{gdPSP} are ( using $\bar{}$ for fixed point):
\begin{align}\label{fp}
\bar {\bf W} =  \bar {\bf F}{\bf C}, \qquad \qquad\bar {\bf M}  = \bar {\bf F}{\bf C}\bar{\bf F}^\top,
\end{align}
where ${\bf C}$ is the input covariance matrix defined as in Eq. \eqref{Cdef}.

\subsection{Proof of item 1}

The result follows from Eq.s \eqref{fdef} and \eqref{fp}:
\begin{align}\label{orth}
{\bf I} = \bar {\bf M}^{-1} \bar {\bf M}=  \bar {\bf M}^{-1}   \bar {\bf F}{\bf C}\bar{\bf F}^\top  =   \bar {\bf M}^{-1}   \bar {\bf W} \bar{\bf F}^\top = \bar{\bf F} \bar{\bf F}^\top 
\end{align}

\subsection{Proof of item 2}\label{1i2}

First note that at fixed points, $\bar{\bf F}^\top\bar{\bf F}$ and ${\bf C}$ commute:
\begin{align}
\bar{\bf F}^\top\bar{\bf F} {\bf C}= {\bf C} \bar{\bf F}^\top\bar{\bf F}.
\end{align}

\begin{proof}
	The result follows from Eq.s \eqref{fdef} and \eqref{fp}:
	\begin{align}
	\bar{\bf F}^\top\bar{\bf F} {\bf C} = \bar{\bf F}^\top\bar{\bf W} =  \bar{\bf F}^\top\bar{\bf M}\bar{\bf F} = \bar {\bf W}^\top\bar{\bf F} = {\bf C}\bar{\bf F}^\top\bar{\bf F}. 
	\end{align}
\end{proof}

$\bar{\bf F}^\top\bar{\bf F}$ and ${\bf C}$ share the same eigenvectors, because they commute. Orthonormality of neural filters, Eq. \eqref{orth}, implies that the $k$ rows of $\bar {\bf F}$ are degenerate eigenvectors of $\bar{\bf F}^\top\bar{\bf F}$ with unit eigenvalue. To see this: $\left(\bar{\bf F}^\top\bar{\bf F}\right)\bar{\bf F}^\top = \bar{\bf F}^\top$. Because the filters are degenerate, the corresponding $k$ shared eigenvectors of ${\bf C}$ may not be the filters themselves but linear combinations of them. Nevertheless, the shared eigenvectors composed of filters span the same space as the filters. 

Since we are interested in PSP, it is desirable that it is the top $k$ eigenvectors of ${\bf C}$ that spans the filter space. A linear stability analysis around the fixed point reveals that any other combination is unstable, and that the PS is stable if $\tau$ is chosen appropriately.

\subsection{Proof of item 3}\label{1i3}

\subsubsection*{Preliminaries}
In order to perform a linear stability analysis, we linearize the system of equations \eqref{gdPSP} around the fixed point. Even though Eq. \eqref{gdPSP} depends on ${\bf W}$ and ${\bf M}$, we will find it convenient to change variables and work with ${\bf F}$ and ${\bf M}$ instead. 

Using the relation ${\bf F} = {\bf M}^{-1}{\bf W}$, one can express linear perturbations of ${\bf F}$ around its fixed point,  ${\bf \delta F}$, in terms of perturbations of ${\bf W}$ and ${\bf M}$:
\begin{align}
{\bf \delta F} &=\delta\left( {\bf  M}^{-1}\right)\bar{\bf W} + \bar {\bf M}^{-1}{\bf \delta W} = -\bar{\bf M} ^{-1}{\delta \bf M} \bar{\bf F} + \bar {\bf M}^{-1}{\bf \delta W}
\end{align}
Linearization of Eq. \eqref{gdPSP} gives:
\begin{align}
\frac{d\delta{\bf W}}{dt} &= 2 {\bf \delta F}{\bf C}  -2{\bf \delta W},
\end{align}
and
\begin{align}\label{evolM}
\tau\frac{d\delta{\bf M}}{dt} &=  {\bf \delta F}{\bf C} \bar{\bf F}^\top +   \bar {\bf F}{\bf C} \bar{\bf \delta F}^\top -{\bf \delta M}.
\end{align}
Using these, we arrive at:
\begin{align}\label{evol}
\frac{d\delta{\bf F}}{dt} = -\frac 1{\tau} \bar {\bf M}^{-1}\left( {\bf \delta F}{\bf C} \bar{\bf F}^\top +   \bar {\bf F}{\bf C} \bar{\bf \delta F}^\top+(2\tau-1){\bf \delta M}\right)\bar{\bf F} + 2\bar {\bf M}^{-1} {\bf \delta F}{\bf C} -2{\bf\delta  F}.
\end{align}
Eq.s \eqref{evolM} and \eqref{evol} define a closed system of equations.

It will be useful to decompose ${\bf \delta F}$ into components\footnote{see Lemma 3 in \citep{pehlevan2015MDS} for a proof of why such a decomposition always exists.}:
\begin{align}
{\bf \delta F} = {\bf \delta A}\bar {\bf F} + {\bf \delta S}\bar {\bf F} + {\bf \delta B}\bar{\bf G}
\end{align}
where $\delta{\bf A}$ is an $k\times k$ anti-symmetric matrix, $\delta{\bf S}$ is an $k\times k$ symmetric matrix and $\delta{\bf B}$ is an $k\times(n-k)$ matrix. $\bar{\bf G}$ is an $(n-k) \times n$ matrix with orthonormal rows, which are orthogonal to the rows of $\bar{\bf F}$.  ${\bf \delta A}$ and ${\bf \delta S}$ are perturbations that keep the neural filters within the filter space. Anti-symmetric ${\bf \delta A}$ corresponds to rotations of filters within the filter space, preserving orthonormality. Symmetric ${\bf \delta S}$ destroys orthonormality. $\delta {\bf B}$ is a perturbation that takes the neural filters outside of the filter space. 

Let ${\bf v}_{1,\ldots,n}$ be the eigenvectors ${\bf C}$ and $\sigma_{1,\ldots,n}$ be the corresponding eigenvalues. We label them such that  $\bar{\bf F}$ spans the same space as the space spanned by the first $k$ eigenvectors. We choose rows of $\bar{\bf G}$ to be the remaining eigenvectors, i.e. $\bar{\bf G}^\top:=[{\bf v}_{k+1},\ldots,{\bf v}_{n}]$. Note that, with this choice,
\begin{align}\label{G}
\sum_k C_{ik}\bar G^\top_{kj} =  \sigma_{j+m} \bar G^\top_{ij}.
\end{align}

\subsubsection*{Proof}

The proof of item 3 in Theorem \ref{mainLSPSP} follows from studying the stability of ${\bf \delta B}$ component.

Multiplying Eq. \eqref{evol} on the right by $\bar{\bf G}^\top$, one arrives at a decoupled equation for ${\bf \delta B}$:
\begin{align}
&\frac {d \delta{B}_{i}^j}{dt} = \sum_m P^j_{im}\delta B_{m}^j, \qquad \quad P^j_{im}:= 2\left(\bar M^{-1}_{im}\sigma_{j+k} -\delta_{im}\right), 
\end{align}
where for convenience we changed our notation to $\delta B_{kj}=\delta B_{k}^j$. For each $j$, the dynamics is linearly stable if all eigenvalues of all ${\bf P^j}$ are negative. In turn, this implies that for stability, eigenvalues of $\bar{\bf M}$ should be greater than  $\sigma_{k+1,\ldots,n}$.

Eigenvalues of $\bar {\bf M}$ are:
\begin{align}
\text{eig}(\bar{\bf M}) = \left\lbrace \sigma_{1}, \ldots,\sigma_{k}\right\rbrace.
\end{align}
\begin{proof} The eigenvalue equation
	\begin{align}\label{eigEq}
	\bar{\bf F}{\bf C}\bar{\bf F}^\top{\boldsymbol \lambda} &= \lambda {\boldsymbol \lambda} 
	\end{align}
	implies that
	\begin{align}\label{eq1}
	{\bf C}\left(\bar{\bf F}^\top {\boldsymbol \lambda}\right)=\lambda \left(\bar{\bf F}^\top{\boldsymbol \lambda}\right),
	\end{align}
	which can be seen by multiplying Eq. \eqref{eigEq} on the left by $\bar{\bf F}^\top$, using the commutation of  $\bar{\bf F}^\top\bar{\bf F}$ and ${\bf C}$, and the orthonormality of neural filters. Further,  orthonormality of neural filters implies:
	\begin{align}\label{eq2}
	\bar{\bf F}^\top\bar{\bf F} \left(\bar{\bf F}^\top{\boldsymbol \lambda}\right)= \left(\bar{\bf F}^\top{\boldsymbol \lambda}\right).
	\end{align}
	Then, $\left(\bar{\bf F}^\top{\boldsymbol \lambda}\right)$ is a shared eigenvector\footnote{One might worry that $\left(\bar{\bf F}^\top{\boldsymbol \lambda}\right)={\bf 0}$, but this would require $\bar{\bf F}\left(\bar{\bf F}^\top{\boldsymbol \lambda}\right)={\boldsymbol \lambda}={\bf 0}$, which is a contradiction.} between ${\bf C}$ and $\bar{\bf F}^\top\bar{\bf F} $. Shared eigenvectors of ${\bf C}$ with unit eigenvalue in $\bar{\bf F}^\top\bar{\bf F}$ are ${\bf v}_1,\ldots,{\bf v}_k$.  Since the eigenvalue of $\left(\bar{\bf F}^\top{\boldsymbol \lambda}\right)$  with respect to $\bar{\bf F}^\top\bar{\bf F} $ is  1, $\bar{\bf F}^\top{\boldsymbol \lambda} $ must be one of ${\bf v}_1,\ldots,{\bf v}_k$. Then Eq. \eqref{eq1} implies that $\lambda = \left\lbrace \sigma_{1}, \ldots,\sigma_{k}\right\rbrace$ and 
	\begin{align}
	\text{eig}(\bar{\bf M}) = \left\lbrace \sigma_{1}, \ldots,\sigma_{k}\right\rbrace.
	\end{align}

\end{proof}

Then, it follows that linear stability requires
\begin{align}
\left\lbrace \sigma_{1}, \ldots,\sigma_{k}\right\rbrace > \left\lbrace \sigma_{k+1}, \ldots,\sigma_{n}\right\rbrace.
\end{align}
This proves our claim that if at the fixed point, the neural filters span a subspace other than the principal subspace, the fixed point is linearly unstable.

\subsection{Proof of item 4}

We now assume that the fixed point is the principal subspace. From item 3, we know that the ${\bf \delta B}$ perturbations are stable. The proof of item 4 in Theorem \ref{mainLSPSP}, follows from the linear stabilities of  $\delta {\bf A}$ and $\delta {\bf S}$. 

Multiplying Eq. \eqref{evol} on the right by $\bar{\bf F}^\top$, 
\begin{align}\label{l1}
\frac{d {\bf \delta A}} {dt} + \frac{d {\bf \delta S}} {dt} = &\left(2-\frac 1{\tau}\right) \left( \bar{\bf M}^{-1}\left({\bf \delta A}+{\bf \delta S}\right)\bar{\bf M}-\bar{\bf M}^{-1}{\bf \delta M}-{\bf \delta A}\right)-\left(2+\frac{1}{\tau}\right){\bf \delta S}.
\end{align}
Unlike the case of ${\bf \delta B}$, this equation is coupled to  ${\bf \delta M}$, whose dynamics, Eq. \eqref{evolM}, reduces to:
\begin{align}\label{l2}
{\tau}\frac{d {\bf \delta M}} {dt}  = \left({\bf \delta A}+{\bf \delta S}\right)\bar{\bf M}+\bar{\bf M}\left(-{\bf \delta A}+{\bf \delta S}\right)-{\bf \delta M}.
\end{align}
We will only consider symmetric ${\bf \delta M}$ perturbations, although if antisymmetric perturbations were allowed, they would stably decay to zero, because the only antisymmetric term on the RHS of \eqref{l2} would come from ${\bf \delta M}$.

From Eq.s \eqref{l1} and \eqref{l2}, it follows that
\begin{align}\label{l3}
\frac{d } {dt} \left({\bf \delta A} + {\bf \delta S} - \left(2\tau-1\right) \bar{\bf M}^{-1}{\bf \delta M} \right) = -4{\bf \delta S}. 
\end{align}
The RHS is symmetric. Therefore, the antisymmetric part of the LHS must equal zero. This gives us an integral of the dynamics
\begin{align}\label{l4}
{\bf \Omega}:={\bf \delta A}(t)  -  \left(\tau-\frac{1}{2}\right) \left(\bar{\bf M}^{-1}{\bf \delta M}(t)-{\bf \delta M}(t)\bar{\bf M}^{-1}\right),
\end{align}
where ${\bf \Omega}$ is a constant, skew symmetric matrix. This reveals an interesting point, after the perturbation ${\bf \delta A}$ and ${\bf \delta M}$ will not decay to ${\bf 0}$, even if the fixed point is stable. In hindsight, this is expected because due to the symmetry of the problem: there is a manifold of stable fixed points (bases in principal subspace), and perturbations within this manifold should not decay. A similar situation was observed in \citep{pehlevan2015MDS}.

The symmetric part of Eq. \eqref{l3} gives,
\begin{align}
\frac{d } {dt} \left( {\bf \delta S} - \left(\tau-\frac{1}{2}\right) \left(\bar{\bf M}^{-1}{\bf \delta M} + {\bf \delta M}\bar{\bf M}^{-1} \right)\right) = -4{\bf \delta S}, 
\end{align}
which, using \eqref{l2}, implies
\begin{align}\label{l2new}
\frac{d  {\bf \delta S}} {dt} = &\left(1-\frac{1}{2\tau}\right)\left[\bar{\bf M}^{-1}{\bf \delta A}\bar{\bf M}-\bar{\bf M}{\bf \delta A}\bar{\bf M}^{-1}\right] \nonumber \\
&+ \left(1-\frac{1}{2\tau}\right)\left[\bar{\bf M}^{-1} {\bf \delta S} \bar{\bf M}+\bar{\bf M}{\bf \delta S}\bar{\bf M}^{-1}+2{\bf \delta S} \right]  -4{\bf \delta S}  \nonumber \\
& - \left(1-\frac{1}{2\tau}\right)\left(\bar{\bf M}^{-1} {\bf \delta M}+{\bf \delta M}\bar{\bf M}^{-1}\right). 
\end{align}
To summarize, we analyze the linear stability of the system of equations, defined by Eq.s \eqref{l2}, \eqref{l4}, \eqref{l2new}.

Next, we change to a basis where $\bar{\bf M}$ is diagonal. $\bar{\bf M}$ is symmetric, its eigenvalues are the principal eigenvectors $\lbrace \sigma_1, \ldots, \sigma_k\rbrace$ as shown in Appendix \ref{1i3} and it has an orthonormal set of eigenvectors. Let ${\bf U}$ be the matrix that contains the eigenvectors of $\bar{\bf M}$ in its columns.  Define
\begin{align}
{\bf \delta A}^U &:= {\bf U}^\top{\bf \delta A}{\bf U}, \nonumber \\
{\bf \delta S}^U &:= {\bf U}^\top{\bf \delta S}{\bf U}, \nonumber \\
{\bf \delta M}^U &:= {\bf U}^\top{\bf \delta M}{\bf U},\nonumber \\
{\bf \Omega}^U &:= {\bf U}^\top{\bf \Omega}{\bf U}
\end{align}
Expressing Eq.s \eqref{l2}, \eqref{l4}, \eqref{l2new} in this new basis,  in component form, and eliminating ${ \delta A}^U_{ij}$:
\begin{align}\label{2dcomp}
\frac{d} {dt}  \left[\begin{array}{cc}  {\delta M}^U_{ij} \\  {\delta S}^U_{ij}\end{array}\right] = {\bf H}^{ij} \left[\begin{array}{cc}  {\delta M}^U_{ij} \\  {\delta S}^U_{ij}\end{array}\right] + \left[\begin{array}{cc}  \frac 1{\tau} \left(\sigma_j-\sigma_i\right) \\ \left(1-\frac{1}{2\tau}\right)\left(\frac{\sigma_j}{\sigma_i}- \frac{\sigma_i}{\sigma_j} \right)\end{array}\right]{\Omega}^U_{ij}
\end{align}
where
\begin{align}
{\bf H}^{ij} := \left[\begin{array}{cc} 
\left(1-\frac{1}{2\tau}\right)  \left(\sigma_j-\sigma_i\right)\left( \frac{1}{\sigma_i}-\frac 1{\sigma_j}\right) -\frac 1{\tau} & \frac 1{\tau} \left(\sigma_j+\sigma_i\right) \\
\left(1-\frac{1}{2\tau}\right)\left[\left(\frac{\sigma_j}{\sigma_i}- \frac{\sigma_i}{\sigma_j} \right)\left(\tau-\frac{1}{2}\right) \left( \frac{1}{\sigma_i}-\frac 1{\sigma_j}\right)  -  \left( \frac{1}{\sigma_i}+\frac 1{\sigma_j}\right) \right]  &\left(1-\frac{1}{2\tau}\right)\left(\frac{\sigma_j}{\sigma_i}+ \frac{\sigma_i}{\sigma_j}+2 \right)  -4 \\
\end{array}\right]
\end{align}

This is a closed system of equations for each $(i,j)$ pair! The fixed point of this system of equations is at
\begin{align}
{\delta S}^U_{ij} &= 0, \nonumber \\
{\delta M}^U_{ij} & = \frac{{\Omega}^U_{ij} }{\frac 1{\sigma_j-\sigma_i}-\left(\tau-\frac{1}{2}\right) \left( \frac{1}{\sigma_i}-\frac 1{\sigma_j}\right)}.
\end{align}
Hence, if the linear perturbations are stable, the perturbations that destroy the orthonormality of neural filters will decay to zero, and orthonormality will be restored. 

The stability of the fixed point is governed by the trace and the determinant of the matrix ${\bf H}^{ij}$. The trace is 
\begin{align}
{\rm Tr}({\bf H}^{ij}) = -4 + \left(2-\frac 1{\tau} \right)\left(\frac{\sigma_i}{\sigma_j}+\frac{\sigma_j}{\sigma_i},\right) - \frac 1{\tau}
\end{align}
and the determinant is
\begin{align}
\det({\bf H}^{ij}) = 8 + \left(\frac 2{\tau} -4\right)\left(\frac{\sigma_i}{\sigma_j}+\frac{\sigma_j}{\sigma_i}\right).
\end{align}
The system \eqref{2dcomp} is linearly stable if both the trace is negative and the determinant is positive. 

Defining the following function of covariance eigenvalues:
\begin{align}
\gamma_{ij} := \left(\frac{\sigma_i}{\sigma_j}+\frac{\sigma_j}{\sigma_i}\right)=2+\frac{\left(\sigma_i-\sigma_j\right)^2}{\sigma_i\sigma_j},
\end{align}
the trace is negative if and only if
\begin{align}\label{scond1}
\tau<\frac{1+1/\gamma_{ij}}{2-4/\gamma_{ij}}
\end{align}
The determinant is positive if and only if
\begin{align}\label{scond2}
\tau<\frac{1}{2-4/\gamma_{ij}}
\end{align}
Since $\gamma_{ij}>0$, Eq. \eqref{scond2} implies Eq. \eqref{scond1}. For stability, Eq. \eqref{scond2} has to be satisfied for all $(i,j)$ pairs. When $i=j$, $\gamma_{ii}=2$, Eq. \eqref{scond2} is satisfied because RHS is infinity. When $i\neq j$, Eq. \eqref{scond2} is nontrivial, and depends on relations between covariance eigenvalues. Since $\gamma_{ij}\geq 2$, $\tau \leq 1/2$ is always stable.

Collectively, our results prove item 4 of Theorem \ref{mainLSPSP}.

\pagebreak

\section{Proof of strong min-max property for PSW objective}\label{A2}

Here we show that minimization with respect to ${\bf Y}$ and maximization with respect to ${\bf M}$ can be exchanged in Eq. \eqref{CSMMW}. We do this by explicitly calculating the value of
\begin{align}\label{CSMM}
-\frac 2T{\rm Tr}\left( {\bf X}^\top{\bf W}^\top{\bf Y} \right)+{\rm Tr}\left( {\bf M}\left(\frac 1T{\bf Y}{\bf Y}^\top-{\bf I}\right) \right)
\end{align}
with respect to min-max and max-min optimizations, and showing that the value does not change. 

\begin{prop}\label{minmax} Let ${\bf A}\in \mathbb{R}^{k\times T}$ with $k\leq T$. Then
\begin{align}\label{minmaxg}
\min_{{\bf Y}\in \mathbb{R}^{k\times T}}\max_{{\bf M}\in \mathbb{R}^{k\times k}} -\frac 2T{\rm Tr}\left( {\bf A}^\top{\bf Y} \right)+{\rm Tr}\left( {\bf M}\left(\frac 1T{\bf Y}{\bf Y}^\top-{\bf I}\right) \right) =-\frac 2{T^{1/2}}{\rm Tr}\left(\left({\bf A}{\bf A}^\top\right)^{1/2}\right).
\end{align}
\end{prop}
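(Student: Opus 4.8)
The plan is to mirror the saddle-point argument of Appendix \ref{A1}: I would exhibit an explicit pair $({\bf Y}^*,{\bf M}^*)$ satisfying the saddle-point inequalities \eqref{sp} for $g({\bf Y},{\bf M}) := -\frac 2T{\rm Tr}({\bf A}^\top{\bf Y}) + {\rm Tr}({\bf M}(\frac 1T{\bf Y}{\bf Y}^\top-{\bf I}))$, evaluate $g$ there, and then invoke the min-max theorem of Appendix \ref{A1} to conclude that the min-max in \eqref{minmaxg} equals (and agrees with the max-min of) the saddle value. This keeps the PSW proof structurally identical to the PSP one.

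First I would read off the candidate saddle point from stationarity. Setting $\partial g/\partial {\bf M}=0$ recovers the whitening constraint $\frac 1T{\bf Y}{\bf Y}^\top={\bf I}$, while $\partial g/\partial {\bf Y}=0$ gives $({\bf M}+{\bf M}^\top){\bf Y}=2{\bf A}$, i.e.\ ${\bf M}{\bf Y}={\bf A}$ for symmetric ${\bf M}$. Using the thin SVD ${\bf A}={\bf U}{\bf \Sigma}_k{\bf V}_k^\top$, with ${\bf U}\in\mathbb{R}^{k\times k}$ orthogonal, ${\bf \Sigma}_k\in\mathbb{R}^{k\times k}$ diagonal, and ${\bf V}_k\in\mathbb{R}^{T\times k}$ having orthonormal columns (here $k\le T$ is precisely what makes the whitening constraint feasible), I would take ${\bf Y}^*=\sqrt T\,{\bf U}{\bf V}_k^\top$ and ${\bf M}^*=\frac 1{\sqrt T}{\bf U}{\bf \Sigma}_k{\bf U}^\top$. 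A short check gives $\frac 1T{\bf Y}^*{{\bf Y}^*}^\top={\bf I}$, shows ${\bf M}^*$ is symmetric positive semidefinite, and verifies ${\bf M}^*{\bf Y}^*={\bf A}$. Then $g({\bf Y}^*,{\bf M}^*)=-\frac 2T{\rm Tr}({\bf A}^\top{\bf Y}^*)=-\frac 2{\sqrt T}{\rm Tr}({\bf \Sigma}_k)=-\frac 2{T^{1/2}}{\rm Tr}(({\bf A}{\bf A}^\top)^{1/2})$, since ${\rm Tr}(({\bf A}{\bf A}^\top)^{1/2})=\sum_i\sigma_i={\rm Tr}({\bf \Sigma}_k)$, which is the asserted value.

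Next I would verify the two inequalities of \eqref{sp}. The first, $g({\bf Y}^*,{\bf M})\le g({\bf Y}^*,{\bf M}^*)$, holds with equality for every ${\bf M}$ (symmetric or not) because the factor $\frac 1T{\bf Y}^*{{\bf Y}^*}^\top-{\bf I}$ vanishes, so the entire ${\bf M}$-dependent term is identically zero. For the second, $g({\bf Y},{\bf M}^*)\ge g({\bf Y}^*,{\bf M}^*)$, I would complete the square: using ${{\bf Y}^*}^\top{\bf M}^*=({\bf M}^*{\bf Y}^*)^\top={\bf A}^\top$, one finds $g({\bf Y},{\bf M}^*)-g({\bf Y}^*,{\bf M}^*)=\frac 1T{\rm Tr}(({\bf Y}-{\bf Y}^*)^\top{\bf M}^*({\bf Y}-{\bf Y}^*))\ge 0$. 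With \eqref{sp} established, the min-max theorem of Appendix \ref{A1} immediately gives $\min_{\bf Y}\max_{\bf M}g=\max_{\bf M}\min_{\bf Y}g=g({\bf Y}^*,{\bf M}^*)$, which is exactly \eqref{minmaxg}.

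The main obstacle, and the only step requiring genuine care, is the positive semidefiniteness of ${\bf M}^*$: the completing-the-square step produces a nonnegative quadratic form only because ${\bf M}^*\succeq 0$, so I must make sure the multiplier matrix coming out of the stationarity conditions is symmetric PSD rather than merely symmetric. Relatedly, I would handle the rank-deficient case (${\bf A}$ having fewer than $k$ nonzero singular values), where ${\bf Y}^*$ and ${\bf M}^*$ are no longer unique but the saddle value is unaffected, and where an orthonormal completion of ${\bf V}_k$ keeps the whitening constraint feasible. Everything else—the stationarity computation and the trace evaluation—is routine, and the argument is a direct analogue of the PSP saddle-point proof in Appendix \ref{A1}.
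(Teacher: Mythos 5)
Your proof is correct, but it takes a genuinely different route from the paper's. The paper proves Proposition \ref{minmax} by observing that the inner maximization over ${\bf M}$ acts as a Lagrangian enforcement of the whitening constraint --- so the left side of \eqref{minmaxg} equals $\min_{\bf Y} -\frac 2T{\rm Tr}({\bf A}^\top{\bf Y})$ subject to $\frac 1T{\bf Y}{\bf Y}^\top={\bf I}$ --- and then solves this constrained alignment problem directly by substituting SVDs of ${\bf A}$ and ${\bf Y}$ and bounding the cross-terms of singular vectors. You instead exhibit an explicit saddle point $({\bf Y}^*,{\bf M}^*)=(\sqrt T\,{\bf U}{\bf V}_k^\top,\ \frac 1{\sqrt T}{\bf U}{\bf \Sigma}_k{\bf U}^\top)$, verify the inequalities \eqref{sp} (the first holding with equality since the multiplier term vanishes at ${\bf Y}^*$, the second by completing the square using ${\bf M}^*\succeq 0$), and invoke the min-max theorem of Appendix \ref{A1}; this is precisely the strategy the paper reserves for the PSP objective in Proposition \ref{minmaxPSP}, transplanted to the PSW objective. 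Your verifications are sound: the candidate pair satisfies the whitening constraint, ${\bf M}^*{\bf Y}^*={\bf A}$, and the trace evaluation gives the asserted value, all of which survive rank deficiency of ${\bf A}$ via the orthonormal completion of ${\bf V}_k$ (which is where the hypothesis $k\le T$ enters). What each approach buys: your saddle-point argument delivers the max-min value simultaneously, i.e.\ it proves Proposition \ref{maxmin} and hence the full strong min-max property of Proposition \ref{minmaxPSW} in one stroke, bypassing the paper's lengthy case analysis over the signature of ${\bf M}$ and the rank of ${\bf A}$ in Appendix \ref{A2}; the paper's direct computation of the min-max side, on the other hand, is more elementary in that it never needs to guess a multiplier matrix or certify its positive semidefiniteness, and its separate max-min analysis makes explicit which configurations of ${\bf M}$ drive the inner minimum to $-\infty$.
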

\begin{proof}
Left side of Eq. \eqref{minmaxg} is a constrained optimization problem:
\begin{align}
\min_{{\bf Y}\in \mathbb{R}^{k\times T}} -\frac 2T{\rm Tr}\left( {\bf A}^\top{\bf Y} \right)\qquad {\rm s.t.}\quad \frac 1T {\bf Y}{\bf Y}^\top={\bf I}.
\end{align}
Suppose an SVD of ${\bf A}=\sum_{i=1}^k \sigma_{A,i}{\bf u}_{A,i}{\bf v}^\top_{A,i}$ and an SVD of ${\bf Y}=\sum_{i=1}^k \sigma_{Y,i}{\bf u}_{Y,i}{\bf v}^\top_{Y,i}$. The constraint sets $\sigma_{Y,i} = \sqrt{T}$. Then the optimization problem reduces to:
\begin{align}
\min_{{\bf u}_{Y,1},\ldots,{\bf u}_{Y,k},{\bf v}_{Y,1},\ldots,{\bf v}_{Y,k}} -\frac 2{\sqrt{T}} \sum_{i=1}^k\sigma_{A,i}\sum_{j=1}^k {\bf u}_{A,i}^\top{\bf u}_{Y,j}{\bf v}_{A,i}^\top{\bf v}_{Y,j},\qquad {\rm s.t.}\quad {\bf u}_{Y,i}^\top{\bf u}_{Y,j}=\delta_{ij}, \quad {\bf v}_{Y,i}^\top{\bf v}_{Y,j}=\delta_{ij}.
\end{align}
Note that $\sum_{j=1}^k {\bf u}_{A,i}^\top{\bf u}_{Y,j}{\bf v}_{A,i}^\top{\bf v}_{Y,j} \leq 1$\footnote{Define $\alpha_j := {\bf u}_{A,i}^\top{\bf u}_{Y,j}$ and $\beta_j := {\bf v}_{A,i}^\top{\bf v}_{Y,j}$. Because  ${\bf u}_{Y,i}^\top{\bf u}_{Y,j}= {\bf v}_{Y,i}^\top{\bf v}_{Y,j}=\delta_{ij}$, it follows that $\sum_{i=1}^k\alpha^2_i =1$ and $\sum_{i=1}^k\beta^2_i\leq 1$. The sum in question is $\sum_{i=1}^k\alpha_i\beta_i$, which is an inner product of a unit vector and a vector with magnitude less than or equal to 1. Hence, the maximal inner product can be 1.} and therefore the cost is lower bounded by $-\frac 2{\sqrt{T}} \sum_{i=1}^k \sigma_{A,i} $. The lower bound is achieved when ${\bf u}_{A,i}={\bf u}_{Y,i}$ and ${\bf v}_{A,i}={\bf v}_{Y,i}$, with the optimal value of the objective $-\frac 2{\sqrt{T}} \sum_{i=1}^k \sigma_{A,i} = -\frac 2{\sqrt{T}}{\rm Tr}\left(\left({\bf A}{\bf A}^\top\right)^{1/2}\right)$.
 
\end{proof}

\begin{prop}\label{maxmin}Let ${\bf A}\in \mathbb{R}^{k\times T}$ with $k\leq T$. Then
\begin{align}
\max_{{\bf M}\in \mathbb{R}^{k\times k}} \min_{{\bf Y}\in \mathbb{R}^{k\times T}} -\frac 2T{\rm Tr}\left( {\bf A}^\top{\bf Y} \right)+{\rm Tr}\left( {\bf M}\left(\frac 1T{\bf Y}{\bf Y}^\top-{\bf I}\right) \right) =-\frac 2{T^{1/2}}{\rm Tr}\left(\left({\bf A}{\bf A}^\top\right)^{1/2}\right).
\end{align}
\end{prop}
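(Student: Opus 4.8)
The plan is to evaluate the left-hand side by performing the two optimizations in the order dictated by the max-min and checking that the result coincides with the min-max value already computed in Proposition \ref{minmax}. A convenient shortcut is weak duality: for the common objective $g({\bf Y},{\bf M})$ one always has $\max_{\bf M}\min_{\bf Y} g \le \min_{\bf Y}\max_{\bf M} g$, so by Proposition \ref{minmax} the max-min is automatically at most $-\frac{2}{T^{1/2}}{\rm Tr}(({\bf A}{\bf A}^\top)^{1/2})$. It therefore suffices to exhibit a single ${\bf M}$ whose inner minimum attains (or, in the degenerate case, approaches) this value, which forces equality. Throughout I would note that only the symmetric part of ${\bf M}$ enters the objective, since $\frac{1}{T}{\bf Y}{\bf Y}^\top-{\bf I}$ is symmetric and the trace of a symmetric-times-antisymmetric product vanishes; hence we may restrict to symmetric ${\bf M}$ without loss.

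First I would carry out the inner minimization over ${\bf Y}$ for fixed symmetric ${\bf M}$. Writing $g({\bf Y},{\bf M})=-\frac{2}{T}{\rm Tr}({\bf A}^\top{\bf Y})+\frac{1}{T}{\rm Tr}({\bf Y}^\top{\bf M}{\bf Y})-{\rm Tr}({\bf M})$, this is quadratic in ${\bf Y}$ with curvature governed by ${\bf M}$. If ${\bf M}$ has a negative eigenvalue the quadratic is unbounded below and $\min_{\bf Y} g=-\infty$, so such ${\bf M}$ cannot be maximizers and the outer maximization effectively ranges over ${\bf M}\succeq0$. For ${\bf M}\succ0$, setting $\partial g/\partial{\bf Y}=-\frac{2}{T}{\bf A}+\frac{2}{T}{\bf M}{\bf Y}=0$ gives ${\bf Y}^*={\bf M}^{-1}{\bf A}$; substituting back and using cyclicity of the trace collapses the two ${\bf A}$-dependent terms, leaving $\min_{\bf Y} g = -\frac{1}{T}{\rm Tr}({\bf A}{\bf A}^\top{\bf M}^{-1})-{\rm Tr}({\bf M})$.

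Next I would maximize $h({\bf M}):=-{\rm Tr}({\bf B}{\bf M}^{-1})-{\rm Tr}({\bf M})$ over ${\bf M}\succ0$, where ${\bf B}:=\frac{1}{T}{\bf A}{\bf A}^\top\succeq0$. Since ${\bf M}\mapsto{\rm Tr}({\bf B}{\bf M}^{-1})$ is convex on the positive-definite cone for ${\bf B}\succeq0$ (it is a sum of matrix-fractional terms ${\bf b}^\top{\bf M}^{-1}{\bf b}$), the function $h$ is concave and any stationary point is the global maximum. The condition $\partial h/\partial{\bf M}={\bf M}^{-1}{\bf B}{\bf M}^{-1}-{\bf I}=0$ gives ${\bf M}^2={\bf B}$, hence ${\bf M}^*={\bf B}^{1/2}=T^{-1/2}({\bf A}{\bf A}^\top)^{1/2}$, and $h({\bf M}^*)=-{\rm Tr}({\bf B}^{1/2})-{\rm Tr}({\bf B}^{1/2})=-2{\rm Tr}({\bf B}^{1/2})=-\frac{2}{T^{1/2}}{\rm Tr}(({\bf A}{\bf A}^\top)^{1/2})$, matching the desired value and the min-max bound.

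The one point requiring care, and the main obstacle, is when ${\bf A}{\bf A}^\top$ is rank-deficient: then ${\bf M}^*={\bf B}^{1/2}$ is singular and lies on the boundary of the cone where ${\bf M}^{-1}$ is undefined, so the maximum is a supremum not attained at a finite invertible matrix. I would resolve this with a regularization argument, taking ${\bf M}_\epsilon=({\bf B}+\epsilon{\bf I})^{1/2}\succ0$ and letting $\epsilon\downarrow0$. A per-eigenvalue computation gives $h({\bf M}_\epsilon)=-\sum_i(\lambda_i/\sqrt{\lambda_i+\epsilon}+\sqrt{\lambda_i+\epsilon})\to-2\sum_i\sqrt{\lambda_i}$, where $\lambda_i\ge0$ are the eigenvalues of ${\bf B}$, so the supremum still equals $-2{\rm Tr}({\bf B}^{1/2})$. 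Combining this lower bound with weak duality and the min-max value of Proposition \ref{minmax} pins the max-min to $-\frac{2}{T^{1/2}}{\rm Tr}(({\bf A}{\bf A}^\top)^{1/2})$, completing the proof.
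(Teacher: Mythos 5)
Your proof is correct, but it takes a genuinely different route from the paper's. The paper proves this proposition by a self-contained, exhaustive case analysis: it splits on whether ${\bf A}$ is zero, full rank, or rank-deficient, and within each case on whether ${\bf M}$ has a negative eigenvalue, is singular positive semidefinite, or is positive definite, computing the inner minimum and then the outer supremum directly in every branch (the rank-deficient branch needs a further sub-case on whether the zero-eigenvectors of ${\bf M}$ are left zero-singular vectors of ${\bf A}$). You instead get the upper bound for free from weak duality, $\max_{\bf M}\min_{\bf Y}\le\min_{\bf Y}\max_{\bf M}$, combined with Proposition \ref{minmax} --- a legitimate move, since that proposition is proved before and independently of this one --- and then only need a matching lower bound, which you obtain by evaluating the inner minimum on the explicit positive definite family ${\bf M}_\epsilon=\bigl(\tfrac 1T{\bf A}{\bf A}^\top+\epsilon{\bf I}\bigr)^{1/2}$ and letting $\epsilon\downarrow 0$; the regularization handles the rank-deficient case uniformly and eliminates the paper's sub-case analysis entirely. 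What your argument buys is brevity and robustness; what the paper's buys is independence from Proposition \ref{minmax} (so the two orderings are evaluated separately and then compared in Proposition \ref{minmaxPSW}) and, more substantively, attainment: when ${\bf A}{\bf A}^\top$ is singular your maximizing sequence ${\bf M}_\epsilon$ leaves the positive definite cone in the limit, so strictly speaking you establish the supremum value rather than a maximum, whereas the paper's rank-deficient case shows the value is actually attained at the singular matrix ${\bf M}=T^{-1/2}\bigl({\bf A}{\bf A}^\top\bigr)^{1/2}$, whose kernel lies inside the left null space of ${\bf A}$ so that the inner minimum stays finite there. If you want the ``$\max$'' in the statement to be a true maximum, add that one observation; otherwise your argument is complete.
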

\begin{proof} Note that we only need to consider the symmetric part of ${\bf M}$, because its antisymmetric component does not contribute to the cost. Below, we use ${\bf M}$ to mean its symmetric part. We will evaluate the value of the objective
\begin{align}\label{maxming}
-\frac 2T{\rm Tr}\left( {\bf A}^\top{\bf Y} \right)+{\rm Tr}\left( {\bf M}\left(\frac 1T{\bf Y}{\bf Y}^\top-{\bf I}\right) \right)
\end{align}
considering the following cases:
\begin{enumerate}
\item  ${\bf A} = {\bf 0}$. In this case the first term in Eq. \eqref{maxming} drops. Minimization of the second term with respect to ${\bf Y}$ gives $-\infty$ if ${\bf M}$ has a negative eigenvalue,  or a $0$ if ${\bf M}$ is positive semidefinite. Hence, the max-min objective is zero, and the proposition holds.
\item ${\bf A} \neq {\bf 0}$ and ${\bf A}$ is full-rank. 
\begin{enumerate}
\item ${\bf M}$ has at least one negative eigenvalue. Then, minimization of Eq. \eqref{maxming} with respect to ${\bf Y}$ gives $-\infty$.
\item ${\bf M}$ is positive semidefinite and has at least one zero eigenvalue. Then, minimization of Eq. \eqref{maxming} with respect to ${\bf Y}$ gives $-\infty$. To achieve this solution, one chooses all columns of ${\bf Y}$ to be one of the zero eigenvectors. The sign of the eigenvector is chosen such that ${\rm Tr}\left( {\bf A}^\top{\bf Y} \right)$ is positive. Multiplying ${\bf Y}$  by a positive scalar, one can reduce the objective indefinitely.
\item ${\bf M}$ is positive definite. Then, ${\bf Y}^* = {\bf M}^{-1}{\bf A}$ minimizes Eq. \eqref{maxming} with respect to ${\bf Y}$. Plugging this back to \eqref{maxming}, we get the objective 
\begin{align}\label{eqeq3}
 -\frac 1T{\rm Tr}\left( {\bf A}^\top{\bf M}^{-1}{\bf A}\right)-{\rm Tr}\left( {\bf M}\right).
\end{align}
The positive definite ${\bf M}$ that maximizes Eq. \eqref{eqeq3} can be found by setting its derivative to zero
\begin{align}
 {{\bf M}^*}^2 = \frac 1T {\bf A}{\bf A}^\top.
\end{align}
Plugging this back in Eq. \eqref{eqeq3}, one gets the objective
\begin{align}
 -\frac 2{\sqrt{T}}{\rm Tr}\left(\left({\bf A}{\bf A}^\top\right)^{1/2}\right),
\end{align}
which is maximal with respect to all possible ${\bf M}$. Therefore the proposition holds.

\end{enumerate}

\item ${\bf A} \neq {\bf 0}$ and ${\bf A}$ has rank $r<k$.
\begin{enumerate} 
\item ${\bf M}$ has at least one negative eigenvalue. Then, minimization of Eq. \eqref{maxming} with respect to ${\bf Y}$ gives $-\infty$, as before.
\item ${\bf M}$ is positive semidefinite and has at least one zero eigenvalue.
\begin{enumerate}
\item If at least one of the zero-eigenvectors of ${\bf M}$ is not a left zero-singular vector of ${\bf A}$, then, minimization of Eq. \eqref{maxming} with respect to ${\bf Y}$ gives $-\infty$. To achieve this solution, one chooses all columns of ${\bf Y}$ to be the zero-eigenvector of ${\bf M}$ that is not a left zero-singular vector of ${\bf A}$. The sign of the eigenvector is chosen such that ${\rm Tr}\left( {\bf A}^\top{\bf Y} \right)$ is positive. Multiplying ${\bf Y}$  by a positive scalar, one can reduce the objective indefinitely.
\item If all of the zero-eigenvectors of ${\bf M}$ are also left zero-singular vectors of ${\bf A}$, then Eq. \eqref{maxming} can be reformulated in the subspace spanned by top $r$ eigenvectors of ${\bf M}$. Suppose a SVD for ${\bf A} = \sum_{i=1}^r\sigma_{A,i}{\bf u}_{A,i}{\bf v}_{M,i}^\top$ with $\sigma_{A,1}\geq \sigma_{A,2}\geq \ldots \geq \sigma_{A,r}$. One can decompose ${\bf Y} = {\bf Y}^{A} + {\bf Y}^\perp$, where columns of ${\bf Y}^\perp$ are perdendicular to the space spanned by $\lbrace {\bf u}_{A,1},\ldots,{\bf u}_{A,r}\rbrace$. Then value of the objective Eq. \eqref{maxming} only depends on ${\bf Y}^{A}$. Defining new matrices $\tilde {\bf A}_{i,:} = {\bf u}_{A,i}^\top{\bf A}$, $\tilde {\bf Y}_{i,:} = {\bf u}_{A,i}^\top{\bf Y}^A$, $\tilde {\bf M}_{ij} = {\bf u}_{A,i}^\top{\bf M}{\bf u}_{A,j}$, where $i,j=1,\ldots,r$, we can rewrite Eq. \eqref{maxming} as  
\begin{align}\label{maxmingred}
-\frac 2T{\rm Tr}\left( \tilde{\bf A}^\top\tilde{\bf Y} \right)+{\rm Tr}\left( \tilde{\bf M}\left(\frac 1T\tilde{\bf Y}\tilde{\bf Y}^\top-{\bf I}\right) \right).
\end{align}
Now $\tilde{\bf A}$ is full-rank and $\tilde{\bf M}$ is positive definite. As in 2.(c), the objective which is maximal with respect to positive definite $\tilde{\bf M}$ matrices is
\begin{align}
-\frac 2{\sqrt{T}}{\rm Tr}\left(\left(\tilde{\bf A}\tilde{\bf A}^\top\right)^{1/2}\right) = -\frac 2{\sqrt{T}}{\rm Tr}\left(\left({\bf A}{\bf A}^\top\right)^{1/2}\right).
\end{align}

\end{enumerate}
\item ${\bf M}$ is positive definite. As in 2.(c), the objective which is maximal with respect to positive definite ${\bf M}$ matrices is
\begin{align}
 -\frac 2{\sqrt{T}}{\rm Tr}\left(\left({\bf A}{\bf A}^\top\right)^{1/2}\right).
\end{align}
This is also maximal with respect to all possible ${\bf M}$. Therefore the proposition holds.

\end{enumerate}
\end{enumerate}

Collectively, these arguments prove Eq. \eqref{maxming}.
\end{proof}

Propositions \eqref{minmax} and \eqref{maxmin} imply the strong min-max property for the PSW cost.

\begin{prop}\label{minmaxPSW}
The strong min-max property for the PSW cost:
\begin{align}
&\min_{{\bf Y}\in \mathbb{R}^{k\times T}}\max_{{\bf M}\in \mathbb{R}^{k\times k}} -\frac 2T{\rm Tr}\left( {\bf X}^\top{\bf W}^\top{\bf Y} \right)+{\rm Tr}\left( {\bf M}\left(\frac 1T{\bf Y}{\bf Y}^\top-{\bf I}\right) \right) \nonumber \\
&\quad= \max_{{\bf M}\in \mathbb{R}^{k\times k}}\min_{{\bf Y}\in \mathbb{R}^{k\times T}} -\frac 2T{\rm Tr}\left( {\bf X}^\top{\bf W}^\top{\bf Y} \right)+{\rm Tr}\left( {\bf M}\left(\frac 1T{\bf Y}{\bf Y}^\top-{\bf I}\right) \right) \nonumber \\
&\quad=-\frac 2{T^{1/2}}{\rm Tr}\left(\left({\bf W}{\bf X}{\bf X}^\top{\bf W}^\top\right)^{1/2}\right) . 
\end{align}
\end{prop}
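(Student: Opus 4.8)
The plan is to obtain this statement as an immediate corollary of Propositions \ref{minmax} and \ref{maxmin}, both of which were already established above for an arbitrary matrix ${\bf A}\in\mathbb{R}^{k\times T}$ with $k\le T$. The key observation is that the cost appearing in the present proposition is exactly the general cost of those two propositions under the substitution ${\bf A}={\bf W}{\bf X}$, so no new analysis is required.

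First I would verify that the substitution is admissible. Since ${\bf W}\in\mathbb{R}^{k\times n}$ and ${\bf X}\in\mathbb{R}^{n\times T}$, the product ${\bf A}={\bf W}{\bf X}$ lies in $\mathbb{R}^{k\times T}$, and in the data regime $k\le T$ the hypotheses of both propositions are met. With this choice ${\bf A}^\top={\bf X}^\top{\bf W}^\top$, so that $-\frac 2T{\rm Tr}({\bf A}^\top{\bf Y})+{\rm Tr}({\bf M}(\frac 1T{\bf Y}{\bf Y}^\top-{\bf I}))$ coincides termwise with the cost written in the proposition, and moreover ${\bf A}{\bf A}^\top={\bf W}{\bf X}{\bf X}^\top{\bf W}^\top$.

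Next I would simply invoke the two prior results. Proposition \ref{minmax} gives that the min-over-${\bf Y}$, max-over-${\bf M}$ value equals $-\frac{2}{T^{1/2}}{\rm Tr}(({\bf W}{\bf X}{\bf X}^\top{\bf W}^\top)^{1/2})$, while Proposition \ref{maxmin} gives that the max-over-${\bf M}$, min-over-${\bf Y}$ value equals the same quantity. Because both orders of optimization produce the identical scalar, they are equal to each other and to the stated fractional-power trace, which is precisely the strong min-max property being claimed.

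There is essentially no analytical obstacle remaining, since all the substantive work was carried out in Propositions \ref{minmax} and \ref{maxmin}; the only points meriting care are bookkeeping ones. One should note that ${\bf W}$ is held fixed throughout (it is the outer variable of the full PSW objective and plays no role in the inner min-max over ${\bf Y}$ and ${\bf M}$), and one should confirm the dimension condition $k\le T$ so that the cited propositions apply verbatim. With these checks in place the equality chain closes immediately.
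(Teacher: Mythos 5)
Your proposal is correct and is precisely the paper's own argument: the paper states Proposition \ref{minmaxPSW} as an immediate consequence of Propositions \ref{minmax} and \ref{maxmin} applied with ${\bf A}={\bf W}{\bf X}$, which is exactly your substitution. Your additional bookkeeping (checking ${\bf A}\in\mathbb{R}^{k\times T}$, $k\le T$, and that ${\bf W}$ is held fixed) is a slightly more explicit writeup of the same step.
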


\pagebreak

\section{Proof of Theorem \ref{mainLSPSW}}\label{proofmainLSPSW}

Here we prove Theorem \ref{mainLSPSW}. 

\subsection{Proof of item 1}

Item 1 directly follows from the fixed point equations of the dynamical system \eqref{gdPSW}, ( $\bar{}$ for fixed point).
\begin{align}\label{fpPSW}
\bar {\bf W} &=  \bar {\bf Y} {\bf X}^\top =  \bar {\bf F}{\bf C}  \nonumber \\
{\bf I} & = \bar {\bf Y}{\bar {\bf Y}}^\top= \bar{\bf F}{\bf C}\bar{\bf F}^\top.
\end{align}

\subsection{Proof of item 2}

We will prove item 2, making use of the normalized neural filters:
\begin{align}\label{Rdef}
{\bf R} := {\bf F}{\bf C}^{1/2},
\end{align}
where the input covariance matrix ${\bf C}$ is defined as in Eq. \eqref{Cdef}.	At the fixed point, the normalized neural filters are orthonormal: 
\begin{align}\label{Rorth}
\bar{\bf R}\bar{\bf R}^\top =  \bar {\bf F}{\bf C}\bar{\bf F}^\top = \bar {\bf Y}{\bar {\bf Y}}^\top ={\bf I}.
\end{align}
Normalized filters commute with the covariance matrix:
	\begin{align}\label{Rcom}
	\bar{\bf R}^\top\bar{\bf R} {\bf C}= {\bf C} \bar{\bf R}^\top\bar{\bf R}.
	\end{align}

\begin{proof}
	\begin{align}
	\bar{\bf R}^\top\bar{\bf R} {\bf C} &= {\bf C}^{1/2}\bar{\bf F}^\top\bar {\bf F}{\bf C}^{3/2} =  {\bf C}^{1/2}\bar{\bf F}^\top\bar {\bf W}{\bf C}^{1/2} =  {\bf C}^{1/2}\bar{\bf F}^\top\bar {\bf M}\bar {\bf F}{\bf C}^{1/2} \nonumber \\
	& = {\bf C}^{1/2}\bar{\bf W}^\top\bar {\bf F}{\bf C}^{1/2} = {\bf C}{\bf C}^{1/2}\bar{\bf F}^\top\bar {\bf F}{\bf C}^{1/2} = {\bf C} \bar{\bf R}^\top\bar{\bf R}. 
	\end{align}
\end{proof}
Therefore, as argued in Appendix \ref{1i2}, rows of ${\bf R}$ span a subspace spanned by some $k$ eigenvectors of ${\bf C}$. If ${\bf C}$ is invertible, rowspace of ${\bf F}$ is the same as ${\bf R}$ (follows from Eq. \eqref{Rdef}) and item 2 follows.

\subsection{Proof of item 3}

\subsubsection*{Preliminaries}

In order to perform a linear stability analysis, we linearize the system of equations \eqref{gdPSW} around the fixed point. The evolution of ${\bf W}$ and ${\bf M}$ perturbations follow from linearization of \eqref{gdPSW}:
\begin{align}\label{lgdPSW}
\tau\frac{d\delta{\bf M}}{dt} &=  {\bf \delta R} \bar{\bf R}^\top +   \bar {\bf R} {\bf \delta R}^\top, \nonumber \\
\frac{d\delta{\bf W}}{dt} &= 2 {\bf \delta R}{\bf C}^{1/2}  -2{\bf \delta W}.
\end{align}

Even though Eq. \eqref{gdPSW} depends on ${\bf W}$ and ${\bf M}$, we will find it convenient to change variables and work with ${\bf R}$, as defined in Eq. \eqref{Rdef}, and ${\bf M}$ instead. Since ${\bf R}$, ${\bf W}$ and ${\bf M}$ are interdependent, we express the perturbations of ${\bf R}$ in terms of ${\bf W}$ and ${\bf M}$ perturbations:
\begin{align}
{\bf \delta R} &= {\bf \delta M}^{-1}\bar{\bf W}{\bf C}^{1/2} + \bar {\bf M}^{-1}{\bf \delta W} {\bf C}^{1/2} = -\bar{\bf M} ^{-1}{\delta \bf M} \bar{\bf R} + \bar {\bf M}^{-1}{\bf \delta W}{\bf C}^{1/2},
\end{align}
which implies that 
\begin{align}
\frac{d\delta{\bf R}}{dt} &= - \bar {\bf M}^{-1}\frac{d\,{\bf \delta M}}{dt} \bar{\bf R} + \bar {\bf M}^{-1}\frac{d\,{\bf \delta W}}{dt} {\bf C}^{1/2}.
\end{align}

Plugging these in and eliminating $\delta{\bf W}$, we arrive at a linearized equation for $\delta {\bf R}$:
\begin{align}\label{evolR}
\frac{d\delta{\bf R}}{dt} = -\frac 1{\tau} \bar {\bf M}^{-1}\left( {\bf \delta R} \bar{\bf R}^\top +   \bar {\bf R} {\bf \delta R}^\top+2\tau{\bf \delta M}\right)\bar{\bf R} + 2\bar {\bf M}^{-1} {\bf \delta R}{\bf C} -2{\bf \delta R}.
\end{align}

To asses the stability of ${\bf \delta R}$, we expand it as in Appendix \ref{1i3}:
\begin{align}\label{expR}
{\bf \delta R} = {\bf \delta A}\bar {\bf R} + {\bf \delta S}\bar {\bf R} + {\bf \delta B}\bar{\bf G}
\end{align}
where $\delta{\bf A}$ is an $k\times k$ skew-symmetric matrix, $\delta{\bf S}$ is an $k\times k$ symmetric matrix and $\delta{\bf B}$ is an $k\times(n-k)$ matrix. $\bar{\bf G}$ is an $(n-k) \times n$ matrix with orthonormal rows. These rows are chosen to be orthogonal to the rows of $\bar{\bf R}$.  As before, skew-symmetric ${\bf \delta A}$ corresponds to rotations of filters within the normalized filter space, symmetric ${\bf \delta S}$ keeps the normalized filter space invariant but destroys orthonormality and $\delta {\bf B}$ is a perturbation that takes the normalized neural filters outside of the filter space. 

Let ${\bf v}_{1,\ldots,n}$ be the eigenvectors ${\bf C}$ and $\sigma_{1,\ldots,n}$ be the corresponding eigenvalues. We label them such that  $\bar{\bf R}$ spans the same space as the space spanned by the first $k$ eigenvectors. We choose rows of $\bar{\bf G}$ to be the remaining eigenvectors, i.e. $\bar{\bf G}^\top:=[{\bf v}_{k+1},\ldots,{\bf v}_{n}]$.

\subsubsection*{Proof}

Proof of item 3 of Theorem \ref{mainLSPSW} follows from studying the stability of ${\bf \delta B}$ component. Multiplying Eq. \eqref{evolR} on the right by $\bar{\bf G}^\top$, we arrive at a decoupled evolution equation:
\begin{align}
\frac {d \delta{B}_{i}^j}{dt} = \sum_m P^j_{im}\delta B_{m}^j, \qquad  P^j_{im}:= 2\left(\bar M^{-1}_{im}\sigma_{j+k} -\delta_{im}\right),
\end{align}
where for convenience we change our notation to $\delta B_{kj}=\delta B_{k}^j$. 

Eq.s \eqref{fpPSW} and \eqref{Rorth} imply $\bar{\bf M}^2 = \bar{\bf W}{\bf  C}\bar{\bf W}^\top= \bar{\bf R}{\bf C}^2\bar{\bf R}^\top$
and hence:
\begin{align}
\bar{\bf M} = \bar{\bf R}{\bf C}\bar{\bf R}^\top.
\end{align}
Taking into account Eq.s \eqref{Rorth} and \eqref{Rcom}, the case at hand reduces to the proof presented in Appendix \ref{1i3}: stable solutions are those for which
\begin{align}
\left\lbrace \sigma_{1}, \ldots,\sigma_{k}\right\rbrace > \left\lbrace \sigma_{k+1}, \ldots,\sigma_{n}\right\rbrace.
\end{align}
This proves that if at the fixed point, normalized neural filters span a subspace other than the principal subspace, the fixed point is linearly unstable. Since the span of normalized neural filters is that of the neural filters, item 3 follows.

\subsection{Proof of item 4}

Proof of item 4 follows from the linear stabilities of $\delta {\bf A}$ and $\delta {\bf S}$. Multiplying Eq. \eqref{evolR} on the right by $\bar{\bf R}^\top$, and separating the resulting equation in to into its symmetric and anti-symmetric parts, we arrive at:
\begin{align}
\frac{d {\bf \delta A}} {dt}  &= -\frac 1{\tau}\left(\bar{\bf M}^{-1}{\bf \delta S}-{\bf \delta S}\bar{\bf M}^{-1}\right)-\bar{\bf M}^{-1}{\bf \delta M}+{\bf \delta M}\bar{\bf M}^{-1}- 2{\bf \delta A} \nonumber \\
&\qquad\qquad + \bar{\bf M}^{-1}{\bf \delta A}\bar{\bf M}+\bar{\bf M}{\bf \delta A}\bar{\bf M}^{-1} + \bar{\bf M}^{-1}{\bf \delta S}\bar{\bf M}-\bar{\bf M}{\bf \delta S}\bar{\bf M}^{-1}, \nonumber \\
\frac{d {\bf \delta S}} {dt}  &= -\frac 1{\tau}\left(\bar{\bf M}^{-1}{\bf \delta S}+{\bf \delta S}\bar{\bf M}^{-1}\right)-\bar{\bf M}^{-1}{\bf \delta M}-{\bf \delta M}\bar{\bf M}^{-1}- 2{\bf \delta S} \nonumber \\
&\qquad\qquad + \bar{\bf M}^{-1}{\bf \delta A}\bar{\bf M}-\bar{\bf M}{\bf \delta A}\bar{\bf M}^{-1} + \bar{\bf M}^{-1}{\bf \delta S}\bar{\bf M}+\bar{\bf M}{\bf \delta S}\bar{\bf M}^{-1}
\end{align}
To obtain a closed set of equations, we complement these equations with ${\bf \delta M}$ evolution, which we obtain by plugging the expansion \eqref{expR} into Eq. \eqref{lgdPSW}:
\begin{align}
\tau\frac{d {\bf \delta M}} {dt}  =2{\bf \delta S}
\end{align}
We only consider symmetric ${\bf \delta M}$ below, since our algorithm preserves the symmetry of ${\bf M}$ in runtime.

We now change to a basis where $\bar{\bf M}$ is diagonal. $\bar{\bf M}$ is symmetric and has an orthonormal set of eigenvectors. Its eigenvalues are the principal eigenvalues $\lbrace \sigma_1, \ldots, \sigma_k\rbrace$ (from Appendix \ref{1i3}). Let ${\bf U}$ be the matrix that contains the eigenvectors of $\bar{\bf M}$ in its columns. Define
\begin{align}
{\bf \delta A}^U &:= {\bf U}^\top{\bf \delta A}{\bf U}, \nonumber \\
{\bf \delta S}^U &:= {\bf U}^\top{\bf \delta S}{\bf U}, \nonumber \\
{\bf \delta M}^U &:= {\bf U}^\top{\bf \delta M}{\bf U}.
\end{align}
In this new basis, the linearized equations, in component form, become:
\begin{align}
\frac{d} {dt}  \left[\begin{array}{cc}  {\delta M}^U_{ij} \\ {\delta A}^U_{ij} \\ {\delta S}^U_{ij}\end{array}\right] = {\bf H}^{ij}\left[\begin{array}{cc}  {\delta M}^U_{ij} \\ {\delta A}^U_{ij} \\ {\delta S}^U_{ij}\end{array}\right],
\end{align}
where
\begin{align}
{\bf H}^{ij} := \left[\begin{array}{ccc} 
 0 & 0 & \frac 2{\tau} \\
 \frac{1}{\sigma_j}-\frac{1}{\sigma_i} & - 2+ \frac{\sigma_j}{\sigma_i}+\frac{\sigma_i}{\sigma_j} & -\frac 1{\tau}\left(\frac 1{\sigma_i}-\frac 1{\sigma_j}\right)+\frac{\sigma_j}{\sigma_i}-\frac{\sigma_i}{\sigma_j} \\
 -\frac{1}{\sigma_j}-\frac{1}{\sigma_i}& \frac{\sigma_j}{\sigma_i}-\frac{\sigma_i}{\sigma_j}& -\frac 1{\tau}\left(\frac 1{\sigma_i}+\frac 1{\sigma_j}\right)+\frac{\sigma_j}{\sigma_i}+\frac{\sigma_i}{\sigma_j}-2
\end{array}\right]
\end{align}

Linear stability is governed by the three eigenvalues of ${\bf H}^{ij}$. One of the eigenvalues is $0$, due to the existence of the rotational symmetry in the problem. The corresponding eigenvector is $\left [ \sigma_{j}-{\sigma_i}, 1, 0 \right]$. Note that the third element of the eigenvector is zero, showing that the orthogonality of the normalized neural filters are not spoiled even in this mode.

For stability of the principal subspace, the other two eigenvalues must be negative, which means their sum should be negative, and their multiplication should be positive. It is easy to show that both the negativity of the summation and the positivity of the multiplication holds if and only if for all $(i,j)$ pairs with $i\neq j$:
\begin{align}
 \tau < \frac{\sigma_i+\sigma_j}{2\left(\sigma_i-\sigma_j\right)^2}.
\end{align}

Hence we have showed that linear perturbations of fixed point weights decay to a configuration in which normalized neural filters are rotations of the original normalized neural filters within the subspace. It follows from Eq. \eqref{Rdef}, that the same holds for neural filters.

\pagebreak

\section{Autapse-free similarity matching network with asymmetric lateral connectivity}\label{coord}

Here, we derive an alternative neural network algorithm for PSP, which does not feature autaptic connections and has asymmetric lateral connections. To this end, we replace the gradient descent neural dynamics defined by Eq. \eqref{gdMDSon} by a coordinate descent dynamics. 

In the coordinate descent approach, at every step, one finds the optimal value of one component of ${\bf y}_t$, while keeping the rest fixed. By taking the derivative of the cost $-4{\bf x}_t^\top{\bf W}{\bf y}_t + 2{\bf y}_t^\top{\bf M}{\bf y}_t$ with respect to $y_{t,i}$ and setting it to zero we find:
\begin{align}\label{cd}
y_{t,i} = \sum_{j=1}\frac{W_{t,ij}}{M_{t,ii}}x_{t,j} - \sum_{j\neq i}\frac{M_{t,ij}}{M_{t,ii}}y_{t,j}.
\end{align}
The components can be cycled through in any order until the iteration converges to a fixed point. The iteration is guaranteed to converge under very mild assumptions: diagonals of ${\bf M}$ have to be positive \citep{luo1991convergence}, which is satisfied if ${\bf M}$ is initialized that way, see Eq. \eqref{wmSM}. Finally, Eq. \eqref{cd} can be interpreted as a Gauss-Seidel iteration and generalizations to other iterative schemes are possible, see \citep{pehlevan2015MDS}. 

The coordinate descent iteration, Eq. \eqref{cd}, can be interpreted as the dynamics of an asynchronous autapse-free neural network, Fig. \ref{Fig1}B, where synaptic weights are:
\begin{align}\label{tildeWM}
\tilde W_{t,ij} &= \frac{W_{t,ij}}{M_{t,ii}}, \qquad
\tilde M_{t,ij} = \frac{M_{t,,j}}{M_{t,ii}}, \qquad \tilde M_{t,ii} = 0.
\end{align}
With this definition, the lateral weights are now asymmetric because $M_{t,ii} \neq M_{t,jj}$ if $i\neq j$. 

We can derive updates for these synaptic weights from the updates for ${\bf W}_t$ and ${\bf M}_t$, Eq. \eqref{wmSM}. By defining another scalar state variable for each $i$th neuron  $\tilde {D}_{t,i} := \tau M_{t,ii}/\eta_{t-1} $, we arrive at\footnote{These update rules can be derived as follows. Start by the definition of the synaptic weights, Eq. \eqref{tildeWM}: $M_{t+1,ii}\tilde{M}_{t+1,ij} = {M}_{t+1,ij}$. By the gradient-descent update Eq. \eqref{wmSM}, ${M}_{t+1,ij}= \left(1-\frac {\eta_{t}}{\tau}\right) {M}_{t,ij} + \frac{\eta_{t}}{\tau} y_{t,i}y_{t,j} = \left(1-\frac{\eta_{t}}{\tau}\right) \tilde {M}_{t,ij}M_{t,ii}  + \frac{\eta_{t}}{\tau} y_{t,i}y_{t,j} $, where in the second equality we again  used Eq. \eqref{tildeWM}. But note that $(1-\frac{\eta_{t}}{\tau})M_{t,ii} = M_{t+1,ii} - \frac{\eta_{t}}{\tau} y_{t,i}^2$, from Eq. \eqref{wmSM}. Combining all of these, $\tilde{M}_{t+1,ij} = \tilde {M}_{t,ij} + \frac {\eta_{t}}{\tau M_{t+1,ii}} \left(y_{t,i}x_{t,j}-y_{t,i}^2\tilde{M}_{t,ij}\right) $. Similar derivation can be given for feedforward updates.}:

\begin{align}
\tilde {D}_{t+1,i} &= \frac{\eta_{t-1}}{\eta_{t}}\left(1-\frac{\eta_{t}}{\tau}\right) \tilde {D}_{t,i} + y_{t,i}^2, \nonumber \\ 
\tilde W_{t+1,ij} &= \left(\frac{1-2\eta_{t}}{1-\eta_{t}/\tau}\right)\tilde W_{t,ij} + \frac{1}{\tilde{D}_{t+1,i}}\left(2\tau y_{t,i}x_{t,j}-\left(\frac{1-2\eta_{t}}{1-\eta_{t}/\tau}\right)y_{t,i}^2\tilde{W}_{t,ij}\right), \nonumber \\
\tilde M_{t+1,i,j\neq i} &= \tilde M_{t,ij} + \frac{1}{\tilde{D}_{t+1,i}}\left(y_{t,i}y_{t,j}-y_{t,i}^2\tilde{M}_{t,ij}\right). \nonumber \\
\tilde M_{t+1,ii} &= 0,
\end{align}
Here, in addition to synaptic weights, the neurons need to keep track of a postsynaptic activity depended variable $\tilde D_{t,i}$ and the gradient descent-ascent learning rate parameters $\eta_{t}$,  $\eta_{t-1}$ and $\tau$.  The updates are local. 

For the special case of $\tau=1/2$ and $\eta_t = \eta/2$, these plasticity rules simplify to,
\begin{align}
\tilde {D}_{t+1,i} &= (1-\eta) \tilde {D}_{t,i} + y_{t,i}^2, \nonumber \\ 
\tilde W_{t+1,ij} &= \tilde W_{t,ij} + \frac{1}{\tilde{D}_{t+1,i}}\left(y_{t,i}x_{t,j}-y_{t,i}^2\tilde{W}_{t,ij}\right) \nonumber \\
\tilde M_{t+1,i,j\neq i} &= \tilde M_{t,ij} + \frac{1}{\tilde{D}_{t+1,i}}\left(y_{t,i}y_{t,j}-y_{t,i}^2\tilde{M}_{t,ij}\right). \nonumber \\
\tilde M_{t+1,ii} &= 0,
\end{align}
which is precisely the neural online similarity matching algorithm we previously gave in \citep{pehlevan2015MDS}. Both feedforward and lateral updates have the same form as a single-neuron Oja's rule \citep{oja1982simplified}. 

Note that the algorithm derived above is essentially the same as the one in the main text: given the same initial conditions and the same inputs, ${\bf x}_t$, they will produce the same outputs,  ${\bf y}_t$. The only difference is a rearrangement of synaptic weights in the neural network implementation.

\pagebreak

\section{Autapse-free constrained similarity matching network with asymmetric lateral connectivity}\label{coordPSW}

Following similar steps to Appendix \ref{coord}, we derive an autapse-free PSW neural algorithm with asymmetric lateral connections. We replace the gradient descent neural dynamics defined by Eq. \eqref{gdCSMon} by a coordinate descent dynamics, where at every step, one finds the optimal value of one component of ${\bf y}_t$, while keeping the rest fixed:
\begin{align}\label{cdCSM}
y_{t,i} = \sum_{j=1}\frac{W_{t,ij}}{M_{t,ii}}x_{t,j} - \sum_{j\neq i}\frac{M_{t,ij}}{M_{t,ii}}y_{t,j}.
\end{align}
The components can be cycled through in any order until the iteration converges to a fixed point. 

The coordinate descent iteration, Eq. \eqref{cdCSM}, can be interpreted as the dynamics of an asynchronous autapse-free neural network, Fig. \ref{Fig1}B, with synaptic weights:
\begin{align}\label{tildeWMCSM}
\tilde W_{t,ij} &= \frac{W_{t,ij}}{M_{t,ii}}, \qquad
\tilde M_{t,ij} = \frac{M_{t,,j}}{M_{t,ii}}, \qquad \tilde M_{t,ii} = 0.
\end{align}
As in Appendix \ref{coord}, the new lateral weights are asymmetric.

Updates for these synaptic weights can be derived from the updates for ${\bf W}_t$ and ${\bf M}_t$, Eq. \eqref{wmCSM}.  Defining another scalar state variable for each $i$th neuron  $\tilde {D}_{t,i} := \tau M_{t,ii}/\eta_{t-1} $, we arrive at

\begin{align}
\tilde {D}_{t+1,i} &= \frac{\eta_{t-1}}{\eta_{t}}\left(1-\frac{\eta_{t}}{\tau}\right) \tilde {D}_{t,i} + y_{t,i}^2-1, \nonumber \\ 
\tilde W_{t+1,ij} &= \left(1-2\eta_{t}\right)\tilde W_{t,ij} + \frac{1}{\tilde{D}_{t+1,i}}\left(2\tau y_{t,i}x_{t,j}-\left(1-2\eta_{t}\right)\left(y_{t,i}^2-1\right)\tilde{W}_{t,ij}\right), \nonumber \\
\tilde M_{t+1,i,j\neq i} &= \tilde M_{t,ij} + \frac{1}{\tilde{D}_{t+1,i}}\left(y_{t,i}y_{t,j}-\left(y_{t,i}^2-1\right)\tilde{M}_{t,ij}\right), \nonumber \\
\tilde M_{t+1,ii} &= 0.
\end{align}
As in Appendix \ref{coord}, in addition to synaptic weights, the neurons need to keep track of a postsynaptic activity depended variable $\tilde D_{t,i}$ and gradient descent-ascent learning rate parameters $\eta_{W,t}$, $\eta_{M,t}$ and $\eta_{M,t-1}$. 

For the special case of $\eta_t =\eta/2$ and $\tau = 1/2$, these plasticity rules simplify to,
\begin{align}
\tilde {D}_{t+1,i} &= (1-\eta) \tilde {D}_{t,i} + y_{t,i}^2, \nonumber \\ 
\tilde W_{t+1,ij} &= (1-\eta)\tilde W_{t,ij} + \frac{1}{\tilde{D}_{t+1,i}}\left(y_{t,i}x_{t,j}-(1-\eta)\left(y_{t,i}^2-1\right)\tilde{W}_{t,ij}\right) \nonumber \\
\tilde M_{t+1,i,j\neq i} &= \tilde M_{t,ij} + \frac{1}{\tilde{D}_{t+1,i}}\left(y_{t,i}y_{t,j}-\left(y_{t,i}^2-1\right)\tilde{M}_{t,ij}\right), \nonumber \\
\tilde M_{t+1,ii} &= 0.
\end{align}

\pagebreak


\end{document}